\newcommand{\boxA}[1]{\tikz[baseline=(char.base)]\node[anchor = north west, draw, rectangle, rounded corners, inner xsep = 1.5mm, inner ysep = 1.25mm, text height = 2.5mm, color = red, fill = white](char){\ensuremath{#1}} ;}
\newcommand{\boxAa}[1]{\tikz[baseline=(char.base)]\node[anchor = north west, draw, rectangle, rounded corners, inner xsep = 1.1mm, inner ysep = 0.93mm, text height = 2.5mm, color = red, fill = white](char){\ensuremath{#1}} ;}
\newtheorem{theorem}{Theorem}
\newtheorem{lemma}{Lemma}
\newtheorem{constraint}{Constraint}
\newtheorem{proposition}{Proposition}
\DeclareMathOperator{\tr}{Tr}
\DeclareMathOperator{\diag}{diag}
\renewcommand{\appendixname}{APPENDIX}
\newcommand{\comments}[1]{}
\newcommand{\bea}{\begin{eqnarray}}
\newcommand{\eea}{\end{eqnarray}}
\newcommand{\besa}{\begin{subequations}\begin{eqnarray}}
\newcommand{\eesa}{\end{eqnarray}\end{subequations}}
\newcommand{\beaa}{\begin{eqnarray}\begin{aligned}}
\newcommand{\eeaa}{\end{aligned}\end{eqnarray}}
\newcommand{\ket}[1]{| #1 \rangle}
\newcommand{\bra}[1]{\langle #1 |}
\newcommand{\av}[1]{\langle #1 \rangle}
\newcommand{\id}{\mathbbm{1}}
\newcommand{\nul}{\mathbb{0}}
\newcommand{\RN}[1]{\textup{\uppercase\expandafter{\romannumeral#1}}}
\newcommand{\rmT}{\mathrm{T}}
\begin{document}

\title{Energy conservation and fluctuation theorem are incompatible for quantum work}

\author{Karen V. Hovhannisyan}
\email{karen.hovhannisyan@uni-potsdam.de}
\affiliation{University of Potsdam, Institute of Physics and Astronomy, Karl-Liebknecht-Str. 24-25, 14476 Potsdam, Germany}

\author{Alberto Imparato}
\email{imparato@phys.au.dk}
\affiliation{Department of Physics and Astronomy, Aarhus University, Ny Munkegade 120, 8000 Aarhus, Denmark}

\begin{abstract}

Characterizing fluctuations of work in coherent quantum systems is notoriously problematic. Here we reveal the ultimate source of the problem by proving that ($\mathfrak{A}$) energy conservation and ($\mathfrak{B}$) the Jarzynski fluctuation theorem cannot be observed at the same time. Condition $\mathfrak{A}$ stipulates that, for any initial state of the system, the measured average work must be equal to the difference of initial and final average energies, and that untouched systems must exchange \textit{deterministically} zero work. Condition $\mathfrak{B}$ is only for thermal initial states and encapsulates the second law of thermodynamics. We prove that $\mathfrak{A}$ and $\mathfrak{B}$ are incompatible for work measurement schemes that are differentiable functions of the state and satisfy two mild structural constraints. This covers all existing schemes and leaves the theoretical possibility of jointly observing $\mathfrak{A}$ and $\mathfrak{B}$ open only for a narrow class of exotic schemes. For the special but important case of state-independent schemes, the situation is much more rigid: we prove that, essentially, only the two-point measurement scheme is compatible with $\mathfrak{B}$.

\end{abstract}

\maketitle

\section{Introduction}
\label{sec:intro}

Work is a central notion in both mechanics and thermodynamics, being essentially the only touching point between the two theories. When information about the full statistics of work is required, which is especially relevant for small systems, classical mechanics answers by means of trajectories \cite{Bochkov_1977, Jarzynski_1997}, leaving no prescription for accessing the statistics of work in the quantum regime \cite{Bochkov_1977, Kurchan_2000, Tasaki_2000, Allahverdyan_2005, Esposito_2009, Allahverdyan_2014, Talkner_2016, Perarnau-Llobet_PRL_2017, Sampaio_2018, Baumer_2018, Brodier_2020}. This uncertainty has bred a number of approaches towards measuring work in the quantum regime \cite{Bochkov_1977, Kurchan_2000, Tasaki_2000, Yukawa_2000, Allahverdyan_2005, Talkner_2007, Allahverdyan_2014, Solinas_2015, Deffner_2016, Aberg_2018, Alhambra_2016, Talkner_2016, Perarnau-Llobet_PRL_2017, Miller_2017, Xu_2018, Sampaio_2018, Gherardini_2021, Brodier_2020, Beyer_2020, Micadei_2021, Kerremans_2022, Janovitch_2022, Beyer_2022, Pei_2023}. Nevertheless, as amply discussed in the literature \cite{Allahverdyan_2005, Allahverdyan_2014, Talkner_2016, Perarnau-Llobet_PRL_2017, Lostaglio_2018, Baumer_2018}, and in Sec.~\ref{sec:statedep}, they all become problematic in certain regimes.

In this paper, we show that the fundamental source of all these problems is that energy conservation and Jarzynski fluctuation theorem (or simply Jarzynski equality) \cite{Jarzynski_1997} for work cannot be observed at the same time in quantum mechanics. Namely, we prove that essentially no quantum measurement slated to measure work can produce a random variable that can simultaneously satisfy \hypertarget{req:a}{$\boxA{\mathfrak{A}}$} energy conservation for all states and \hypertarget{req:b}{$\boxA{\hspace{-0.2mm}\mathfrak{B}\hspace{-0.3mm}}$} Jarzynski equality (JE) for all thermal states.

Since the JE necessarily holds in classical mechanics, requiring it also in the quantum regime is a way of imposing a weak form of quantum--classical correspondence principle on the distribution of work, without going into the specifics of the quantum--classical transition itself. Thus, elegantly incorporating both the second law of thermodynamics and the correspondence principle, the JE can serve as a high-level filter through which physically meaningful definitions of work fluctuations ought to be able to pass, especially given the foundational role fluctuation theorems play in statistical mechanics \cite{Seifert_2012, Esposito_2009}.

Undeniably, energy conservation is another such filter, and that virtually no definition of work fluctuations can pass through both, means that quantum work cannot be thought of as a classical random variable. In our definition, a measurement scheme obeys energy conservation if the work statistics measured by the scheme reflect energy conservation for the \textit{unmeasured} system. Specifically, the average work performed on an unmeasured thermally isolated system is equal to the difference between the final and initial average energies of the system, and a proper measurement of work must output a random variable the first moment of which coincides with
that value. We call this condition \hypertarget{req:aa}{$\boxAa{\mathfrak{A}_1\hspace{-0.7mm}}$}. Another simple consequence of energy conservation is that, if a system is untouched, then zero work with probability $1$ is performed on it. Therefore, a correct measurement of work must output exactly that statistics; this is condition \hypertarget{req:ab}{$\boxAa{\mathfrak{A}_2\hspace{-0.7mm}}$}. One may call \hyperlink{req:aa}{$\mathfrak{A}_1$}\newcommand{\reqAa}{\hyperlink{req:aa}{$\mathfrak{A}_1$}} the ``average energy conservation'' condition, and to contrast with that, we will call \hyperlink{req:a}{$\mathfrak{A}$}\newcommand{\reqA}{\hyperlink{req:a}{$\mathfrak{A}$}}, which is {\reqAa} and \hyperlink{req:ab}{$\mathfrak{A}_2$}\newcommand{\reqAb}{\hyperlink{req:ab}{$\mathfrak{A}_2$}} imposed jointly, the ``\textit{detailed} energy conservation'' condition.

\medskip

Our analysis is divided into two parts. First, we focus on work measuring schemes that \textit{do not} depend on the initial state of the system. We prove that, for such schemes, \hyperlink{req:b}{$\mathfrak{B}$}\newcommand{\reqB}{\hyperlink{req:b}{$\mathfrak{B}$}} necessitates a conflict with {\reqAa}. Moreover, the schemes that satisfy {\reqB} and yield the correct average for initial thermal states produce the same statistics as the two-point measurement (TPM) scheme \cite{Kurchan_2000, Tasaki_2000, Esposito_2009, Talkner_2007, Campisi_2011}; see Theorem~\ref{thm:JE=TPM}. This establishes that JE~$\Leftrightarrow$~TPM for state-independent schemes.

Then, we ask whether {\reqA} and {\reqB} can be reconciled by using schemes that are allowed to depend on the system's initial state. Considering physically meaningful only those schemes that depend on the state differentiably, we prove that, while {\reqAa} and {\reqB} can be made compatible, when requiring energy conservation to be satisfied to a fuller extent---namely, imposing {\reqAb} alongside {\reqAa}---the compatibility with {\reqB} breaks down for essentially all reasonable state-dependent schemes, leaving the theoretical possibility open for only a narrow exotic class of schemes (Theorem~\ref{thm:no-go}). Interestingly, merely requiring {\reqAa} and {\reqAb} to hold at the same time already severely restricts the class of acceptable schemes (Lemmas~\ref{thm:AaAbClashIndep} and~\ref{thm:AaAbClash}).

\section{Formal setup}
\label{sec:setting}

Throughout this paper, we will focus on closed systems (also called ``conservative'' in classical mechanics). This setting is the most general: any open system is a part of a larger closed system consisting of the system itself and all the external systems with which it is correlated, interacts, and will interact during the process under study \cite{ll5, Lindblad_book}. Arguably, this picture also incorporates the process of quantum measurement \cite{Allahverdyan_2013, Masanes_2019, Abdelkhalek_2016}. For closed systems, a process is described by a time-dependent Hamiltonian $H(t)$ that at the beginning ($t = t_{\mathrm{in}}$) has some value $H$ and at the end ($t = t_{\mathrm{fin}}$) some value $H'$, generating the unitary evolution operator $U = \overrightarrow{\exp}\big[- i \int_{t_{\mathrm{in}}}^{t_{\mathrm{fin}}}dt H(t)\big]$.

For any quantum process, the observed statistics of any quantity (e.g., work) can be described by a positive operator-valued measure (POVM) \cite{mikeike}. Since the whole physical arrangement through which the process is executed and measured is encapsulated in the POVM, by a ``work measuring scheme'' we will understand a POVM and a set of outcomes prescribed to it: $\{ M_W, W \}_W$, where $M_W$ are the POVM elements and $W$ are the associated outcomes. Thus, any random variable representing work is nothing but a (generalized) quantum measurement with outcomes $W$ occurring with probabilities
\begin{align} \label{eq:POVM}
p_W = \tr(\rho M_W),
\end{align}
where $\rho$ is the initial state of the system.

When the measuring device, through which the statistics is observed, does not depend on the system's initial state, then, expectedly, the corresponding POVM is also state-independent \cite{Perarnau-Llobet_PRL_2017}. Surely, the scheme shall depend on the process (i.e., $H$, $H'$, and $U$). In general, however, both the POVM elements and the outcomes will also depend on the system's initial state. Such state-dependent schemes comprise the most general set of work measurements.

\medskip

In these terms, {\reqAa} means that
\beaa \label{eq:aver}
\sum_W W \tr(\rho M_W) &= \tr(U \rho \, U^\dagger H') - \tr(\rho H)
\\
&= \tr(\rho \, \Omega)
\eeaa
is required to hold for all $\rho$'s. Here, $\tr(\rho H)$ and $\tr(U \rho U^\dagger H')$ are, respectively, the initial and final average energies, and
\begin{align} \label{hatW:def}
\Omega = U^\dagger H' U - H
\end{align}
is the so-called Heisenberg operator of work (HOW) \cite{Allahverdyan_2005}.

Condition {\reqB} imposes
\begin{align} \label{eq:jarz}
\sum_W e^{-\beta W} \tr(\tau_\beta M_W) = e^{-\beta (F_\beta [H'] - F_\beta [H])},
\end{align}
for any $\beta > 0$, where $\tau_\beta = \frac{1}{Z_\beta [H]} e^{-\beta H}$ is the thermal state, with $Z_\beta[H]=\tr e^{-\beta H} $ and $F_\beta[H]= -\beta^{-1}\ln Z_\beta [H]$ being, respectively, the partition function and equilibrium free energy.

Lastly, {\reqAb} concerns systems that are untouched, i.e., when $H(t) = H = \mathrm{const}$. In such a case, {\reqAb} demands that $p_{W=0} = \tr(\rho M_{W=0}) = 1$ for any density matrix $\rho$.

\subsection{Discussion of the setup}
\label{sec:setup_discuss}

\begin{table}[t!]
\caption{Relationship of all existing schemes with conditions \textcolor{red}{$\mathfrak{A}_1$}, \textcolor{red}{$\mathfrak{A}_2$}, and \textcolor{red}{$\mathfrak{B}$}. Schemes satisfying only one of these conditions \cite{Deffner_2016} or featuring quasiprobabilities \cite{Allahverdyan_2014, Solinas_2015, Miller_2017, Xu_2018, Brodier_2020, Pei_2023} are not shown. Also not shown are Refs.~\cite{Aberg_2018, Alhambra_2016}, since their framework does not produce any work statistics for states with coherences in the energy eigenbasis. We see that none of the schemes simultaneously satisfies all three conditions.}
\label{tab:literature}
\begin{tabular}{ l c c c }
  \hline \hline
  Scheme & \, {\reqAa} & \, {\reqAb} & \, {\reqB}
\\
  \hline
  \cite{Bochkov_1977, Allahverdyan_2005}, \, \cite{Sampaio_2018}, \, \cite{Beyer_2020}, \, \cite{Kerremans_2022} & \, \ding{52} & \, \ding{52} & \, \ding{56}
\\
  \cite{Gherardini_2021}, \, \cite{Micadei_2021}, \, \cite{Beyer_2022}, \, Appendix~\ref{app:scheme} & \, \ding{52} & \, \ding{56} & \, \ding{52}
\\
  \cite{Kurchan_2000, Tasaki_2000}, \, \cite{Yukawa_2000}, \, $\Upsilon_\rho$ [Eq.~\eqref{rotow}] & \, \ding{56} & \, \ding{52} & \, \ding{52} \\
  \hline \hline
\end{tabular}
\end{table}

To provide additional context, we show in Table~\ref{tab:literature} the relationship of all existing schemes with Conditions {\reqAa}, {\reqAb}, and {\reqB}; we see that none of them satisfies all three at the same time. Additionally, for the sake of the forthcoming analysis, below we briefly review the two paradigmatic work measurement schemes, TPM \cite{Kurchan_2000, Tasaki_2000} and HOW \cite{Bochkov_1977, Allahverdyan_2005}, in the light of these conditions.

In the TPM scheme \cite{Kurchan_2000, Tasaki_2000}, one first measures the energy at the beginning, obtaining the outcome $E_a$ and post-measurement state $P_a \rho P_a / \tr(\rho P_a)$ with probability $\tr(\rho P_a)$, where $E_a$ and $P_a$ are, respectively, the eigenvalues and eigenprojectors of $H$ ($H = \sum_a E_a P_a$). Then, the unitary process is implemented and the energy of the system is measured again. This yields the outcome $E'_k$ with conditional probability $\tr\left[U \frac{P_a \rho P_a}{\tr(\rho P_a)} U^\dagger P'_k \right]$, where $E'_k$ and $P'_k$ are the eigenvalues and eigenprojectors of $H'$. Thus, according to the scheme, the outcomes of work and their probabilities are
\begin{align} \label{tpem}
W_{ak} = E'_k - E_a, \qquad p_{a k}^{\mathrm{TPM}} = \tr\big(\rho M_{ak}^{\mathrm{TPM}}\big),
\end{align}
where
\begin{align} \label{salame}
M_{a k}^{\mathrm{TPM}} = P_a U^\dagger P'_k U P_a.
\end{align}
Noting that $\big[M_{a k}^{\mathrm{TPM}}, H\big] = \nul$ for all $a$ and $k$, we see that $\sum W_{a k} M_{a k}^{\mathrm{TPM}}$ commutes with $H$ for any $H$, $H'$, and $U$; here $\nul$ is the zero operator. Whereas there exist such $H$, $H'$, and $U$ for which $[U^\dagger H' U, H] \neq \nul$ and hence $\Omega$ does not commute with $H$, and therefore $\Omega$ cannot be equal to $\sum W_{a k} M_{a k}^{\mathrm{TPM}}$. In such a case, there will exist a $\rho$ for which Eq.~\eqref{eq:aver} is violated, which shows that the TPM scheme is incompatible with {\reqAa} \cite{Allahverdyan_2005, Talkner_2016, Perarnau-Llobet_PRL_2017, Baumer_2018}.
Nonetheless, we note that the TPM scheme always satisfies {\reqAb}. Moreover, in the classical limit, where all the commutators go to zero, the discrepancy with {\reqAa} also vanishes (see Refs.~\cite{Jarzynski_2015, Garcia-Mata_2017, Funo_2018, Fei_2018} for more details about the TPM scheme's classical limit).

In the HOW scheme \cite{Allahverdyan_2005}, the statistics of work is identified with the measurement statistics of the operator $\Omega$. Therefore, by construction, the scheme satisfies both requirements {\reqAa} and {\reqAb}. However, {\reqB} is not satisfied because, in general,
\begin{align} \nonumber
\big\langle e^{-\beta W} \big\rangle_{\mathrm{HOW}} := \tr\big(\tau_\beta \, e^{-\beta \Omega} \big) \geq e^{-\beta (F_\beta [H'] - F_\beta [H])},
\end{align}
which straightforwardly follows from the Golden--Thompson inequality \cite{Petz_1994}. When the system's Hilbert space is finite-dimensional, the inequality is strict when (and only when) $\beta [\Omega, H] = \beta [U^\dagger H' U, H] \neq \nul$ \cite{Petz_1994}. However, the HOW scheme satisfies {\reqB} in the classical limit. In finite-dimensional spaces, this is a simple consequence of all the commutators vanishing in the classical limit \cite{Allahverdyan_2005}. In continuous-variable systems, issues with continuity may arise, so it is not mathematically guaranteed anymore. Nonetheless, for a harmonic oscillator, by performing an explicit calculation, we show in Supplementary Note~\ref{suppl:HOW_clasl} that the HOW scheme indeed satisfies {\reqB} in the $\hbar \to 0$ limit.



That {\reqAa}, {\reqAb}, and {\reqB} become compatible in the classical limit for the TPM and HOW schemes (and in fact for some other schemes too \cite{Pan_2019, Brodier_2020}), is yet another indication that the classical approach to determining the statistics of work is unlikely to provide useful guidance for the quantum regime.

\section{Results}

\subsection{State-independent schemes}
\label{sec:JEclass}

Let us define the JE class to be the set of all state-independent schemes that satisfy {\reqB}, i.e., the Jarzynski equality for all thermal states. The following lemma, proven in Appendix~\ref{app:Lemma1proof}, gives a
complete characterization of the JE class.

\begin{lemma}[Characterization of the JE class] \label{thm:JEclass}
A state-independent scheme $\{ (M_W, W) \}_W$ with a discrete set of outcomes $\{W\}$ satisfies {\reqB} \textit{iff}
\begin{align} \label{JC3}
\{ W \}_W = \{E'_k - E_a\}_{a = 1}^A\phantom{|}_{k = 1}^K,
\end{align}
so that the outcomes and the corresponding POVM elements can be labeled as $W_{a k} = E'_k - E_a$ and $M_{a k}$, along with
\begin{align} \label{bresaola}
\sum_{\substack{a, \, b, \, m; \\ E'_m - E_b + E_a = E'_k}} \tr(M_{b m} P_a) = \tr P'_k
\end{align}
and
\begin{align} \label{mortadella}
\tr(M_{b m} P_a) = 0 \;\;\, \mathrm{if} \;\;\, E'_m - E_b + E_a \not\in \{E'_k\}_{k=1}^K.
\end{align}
\end{lemma}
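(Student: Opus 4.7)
The plan is to reduce condition $\mathfrak{B}$ to an identity among sums of exponentials in $\beta$ and then exploit the linear independence of distinct exponentials to extract pointwise conditions on $\tr(P_a M_W)$. Expanding \eqref{eq:jarz} with the spectral decompositions $H = \sum_a E_a P_a$ and $H' = \sum_k E'_k P'_k$, and clearing the factor $Z_\beta[H]$, I would first rewrite $\mathfrak{B}$ as the identity
\begin{align*}
\sum_{W,a} e^{-\beta(W+E_a)}\tr(P_a M_W) = \sum_k e^{-\beta E'_k}\tr P'_k
\end{align*}
holding for all $\beta > 0$. Since all coefficients $\tr(P_a M_W)$ and $\tr P'_k$ are non-negative (POVM elements and spectral projectors are PSD), the uniqueness of generalized Dirichlet series forces the coefficient of each exponent $e^{-\beta y}$ to match on both sides.

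I would then split into two cases. If $y\notin\{E'_k\}$, the right-hand side contributes nothing, and non-negativity forces $\tr(P_a M_W)=0$ for every $(W,a)$ with $W+E_a=y$. If $y=E'_k$ (the $E'_k$ being distinct by choice of spectral decomposition), one obtains $\sum_{(W,a):\,W+E_a=E'_k}\tr(P_a M_W)=\tr P'_k$. Applied to any outcome $W$ of the scheme with $M_W\neq 0$, so that $\tr M_W=\sum_a\tr(P_a M_W)>0$, the first case forces $W=E'_k-E_a$ for some $(a,k)$, so the outcome set is contained in $\{E'_k-E_a\}$; extending the scheme by trivial zero POVM elements where needed gives \eqref{JC3}.

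To pass to the $(a,k)$ labeling used in \eqref{bresaola}--\eqref{mortadella}, I would choose any PSD decomposition $M_W = \sum_{(a,k):\,E'_k-E_a=W}M_{ak}$ (e.g., concentrating $M_W$ on a single representative pair). Each piece satisfies $0\leq M_{ak}\leq M_W$, so $\tr(P_a M_{ak})=0$ whenever $\tr(P_a M_W)=0$; this translates the first case directly into \eqref{mortadella}, and, after grouping the $(b,m)$'s with $E'_m-E_b$ fixed (so that $\sum_{(b,m):E'_m-E_b=W}M_{bm}=M_W$), the second case into \eqref{bresaola}. The converse is a one-shot computation: reinserting the relabeled scheme into $\sum_W e^{-\beta W}\tr(\tau_\beta M_W)$, using \eqref{mortadella} to restrict the exponents to $\{E'_k\}$, and applying \eqref{bresaola}, collapses the sum to $Z_\beta[H']/Z_\beta[H]$.

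The main subtlety is not a deep obstruction but a bookkeeping one: the map $(a,k)\mapsto E'_k-E_a$ is generically many-to-one, so the labeling in \eqref{bresaola}--\eqref{mortadella} must be read as a PSD splitting of $M_W$ rather than a bijective relabeling; the proof must make clear that any such splitting works because of the domination $M_{ak}\leq M_W$. I should also verify that equating Dirichlet coefficients is rigorous in the setting of the lemma: with a discrete outcome set and with the standard requirement that the partition functions in \eqref{eq:jarz} converge, distinct exponentials are linearly independent on $(0,\infty)$, so the coefficient matching is legitimate.
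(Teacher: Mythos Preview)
Your proof is correct and reaches the same conclusion as the paper, but via a genuinely different route. After writing the same identity $\sum_{W,a}\tr(M_W P_a)\,e^{-\beta(E_a+W)}=\sum_k g'_k\,e^{-\beta E'_k}$, the paper does \emph{not} invoke uniqueness of Dirichlet series. Instead it expands both sides as power series in $\beta$ to obtain the moment identities $\sum_{W,a}\tr(M_W P_a)(E_a+W)^N=\sum_k g'_k(E'_k)^N$ for all $N\in\mathbb{N}$, and then peels off the exponents inductively: dividing by $(E'_K)^{2L}$ and sending $L\to\infty$ isolates the maximal exponent, the odd powers $N=2L+1$ kill any spurious contributions at $-E'_K$, and iterating $K$ times yields \eqref{JC3}--\eqref{mortadella}. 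Your Dirichlet-series argument shortcuts this entire limiting procedure in one stroke; the paper's approach is more elementary in that it avoids citing the uniqueness theorem for generalized Dirichlet series, and its large-$N$ asymptotics make the boundedness of the exponent set (hence \eqref{JC3}) visible by a direct contradiction. Your handling of the many-to-one map $(a,k)\mapsto E'_k-E_a$ via a PSD splitting $M_W=\sum M_{ak}$ with $M_{ak}\leq M_W$ is also more explicit than the paper's, which simply announces the relabeling without discussing collisions.
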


A key consequence of Lemma~\ref{thm:JEclass}, also proven in Appendix~\ref{app:Lemma1proof}, is the following no-go theorem.

\begin{theorem}[No-go for state-independent schemes] \label{thm:indep_no-go}
If a state-independent scheme satisfies {\reqB}, then it is incompatible with {\reqAa}.
\end{theorem}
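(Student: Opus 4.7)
The plan is to combine the structural characterization in Lemma~\ref{thm:JEclass} with positivity of POVM elements, and then exhibit a single process for which the resulting work operator is forced to commute with $H$, while the Heisenberg operator of work $\Omega$ does not.

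First, I would apply Lemma~\ref{thm:JEclass}: since \reqB\ holds, the outcomes are $W_{ak} = E'_k - E_a$, and Eq.~\eqref{mortadella} yields $\tr(M_{bm}P_a) = 0$ whenever $E_a - E_b + E'_m$ lies outside the spectrum of $H'$. I would then upgrade this trace identity to the operator identity $P_a M_{bm} = M_{bm}P_a = 0$ by invoking positivity: because $M_{bm}\geq 0$, the operator $P_a M_{bm}P_a$ is positive semidefinite with trace $\tr(M_{bm}P_a) = 0$, hence vanishes, which in turn forces $M_{bm}^{1/2}P_a = 0$. Thus each $M_{bm}$ has support inside $\bigoplus_{a:\,E_a - E_b + E'_m \in \{E'_k\}} P_a\mathcal{H}$.

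Next, I would specialize to a process in which the spectra of $H$ and $H'$ are non-degenerate and, jointly, $\mathbb{Q}$-linearly independent (e.g., rationally independent eigenvalues). Then the relation $E_a - E_b + E'_m = E'_k$ admits only the trivial solution $(a,k)=(b,m)$, so the support restriction collapses to $M_{bm} = P_b M_{bm} P_b$. Summing, the work operator $\mathcal{W} := \sum_{a,k}(E'_k - E_a)\,M_{ak}$ is block-diagonal in the $H$ eigenbasis, and therefore $[\mathcal{W},H]=0$. Choosing a unitary $U$ with $[U^\dagger H' U, H]\neq 0$---a generic condition---I obtain $[\Omega,H]\neq 0$, whence $\mathcal{W}\neq\Omega$. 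Consequently there exists a $\rho$ violating Eq.~\eqref{eq:aver}, ruling out \reqAa\ for this process and hence for the scheme as a whole.

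The step I expect to be the main obstacle is the passage from Eq.~\eqref{mortadella} to an operator-support statement, since the lemma only gives a scalar (trace) identity and it is the positivity of $M_{bm}$ that sharpens it into the operator equation that ultimately pins down the structure of $\mathcal{W}$. Beyond that, the genericity assumption deserves scrutiny: for resonant or gap-degenerate spectra, the support of $M_{bm}$ can spread over several eigenspaces of $H$ and $\mathcal{W}$ need not commute with $H$. Nevertheless, Theorem~\ref{thm:indep_no-go} is an incompatibility statement, so exhibiting a single obstructing process suffices, and the restriction to generic spectra does not weaken the conclusion.
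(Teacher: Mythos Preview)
Your argument is correct and follows essentially the same route as the paper: invoke Lemma~\ref{thm:JEclass}, restrict to spectra for which the only solution of $E_a - E_b + E'_m = E'_k$ is the trivial one (the paper phrases this as ``nondegenerate set of work outcomes,'' Eq.~\eqref{doublenondeg}), deduce $M_{ak} = P_a M_{ak} P_a$ and hence $[\mathcal{W},H]=0$, and then contrast with a $U$ for which $[\Omega,H]\neq 0$. Your positivity argument upgrading $\tr(M_{bm}P_a)=0$ to $P_a M_{bm} P_a = 0$ is exactly the step the paper leaves implicit when passing from Eq.~\eqref{mortadella} to Eq.~\eqref{JC4}, so you have in fact made the presentation slightly more complete.
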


Note that the JE class strictly contains the TPM scheme, which is expressed in the fact that Eqs.~\eqref{bresaola} and~\eqref{mortadella} [and even the more specific Eqs.~\eqref{JC4} and~\eqref{JC5}] allow for more general POVMs. However, if in addition to {\reqB}, we merely require the scheme to respect {\reqAa} \textit{on thermal states,} the following theorem, proven in Appendix~\ref{app:proofJE=TPM}, will hold.

\begin{theorem}[JE $\Leftrightarrow$ TPM] \label{thm:JE=TPM}
For the (generic) case of nondegenerate $H$ and nondegenerate set of outcomes, the schemes in the JE class that satisfy Eq.~\eqref{eq:aver} for thermal initial states are equivalent to the TPM scheme.
\end{theorem}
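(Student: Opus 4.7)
The plan is to combine Lemma~\ref{thm:JEclass} with the thermal-state averaging requirement to pin the POVM down to the TPM form. I would first use nondegeneracy of the outcome set to collapse Eqs.~\eqref{bresaola} and~\eqref{mortadella}. The equation $E'_m - E_b + E_a = E'_k$ is equivalent to $E'_k - E'_m = E_a - E_b$; by nondegeneracy of $\{E'_k - E_a\}_{a,k}$ its only solution is $a = b$ and $m = k$, so for every $a \neq b$ the sum $E'_m - E_b + E_a$ fails to lie in the spectrum of $H'$. Equation~\eqref{mortadella} then gives $\tr(M_{bm} P_a) = \langle a|M_{bm}|a\rangle = 0$ for all $a \neq b$ (using $P_a = |a\rangle\langle a|$ thanks to nondegeneracy of $H$), and positivity of $M_{bm}$ forces $M_{bm}|a\rangle = 0$ for $a \neq b$, so that $M_{bm} = c_{bm} P_b$ for some $c_{bm} \geq 0$.

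Substituting $M_{ak} = c_{ak} P_a$ into Eq.~\eqref{bresaola} (whose sum collapses to the single term $a = b$, $m = k$) yields $\sum_a c_{ak} = 1$ for each $k$, while POVM completeness $\sum_{ak} M_{ak} = \id$ yields $\sum_k c_{ak} = 1$ for each $a$; hence $(c_{ak})$ is doubly stochastic. Inserting the same form and $\rho = \tau_\beta$ into Eq.~\eqref{eq:aver}, and using $\sum_k c_{ak} = 1$ to cancel the $-E_a$ contributions, leaves
\begin{equation*}
\sum_a e^{-\beta E_a}\Bigl[\sum_k E'_k c_{ak} - \sum_k E'_k \langle a|U^\dagger P'_k U|a\rangle\Bigr] = 0
\end{equation*}
for every $\beta > 0$. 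Nondegeneracy of $H$ makes $\{e^{-\beta E_a}\}_a$ linearly independent in $\beta$, so each bracket must vanish, giving $\sum_k E'_k c_{ak} = \sum_k E'_k \langle a|U^\dagger P'_k U|a\rangle$ for every $a$.

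The remaining and substantive step is to conclude $c_{ak} = \langle a|U^\dagger P'_k U|a\rangle$ for every pair $(a,k)$. Once established, this gives $M_{ak} = P_a U^\dagger P'_k U P_a = M_{ak}^{\mathrm{TPM}}$ and the theorem follows. The TPM values $|U_{ka}|^2$ are doubly stochastic (Schur's theorem) and do solve the row identities, but double stochasticity together with only $A$ energy-weighted row constraints is in general underdetermined (the equation count $2A+K-1$ saturates the $AK$ unknowns only in low dimensions), so uniqueness is not automatic. I would expect the resolution to lean on the ``generic'' hypothesis of the theorem, combined with a sharper form of the JE-class identities than the coarse Eqs.~\eqref{bresaola}--\eqref{mortadella} (the refinement advertised in the text between Lemma~\ref{thm:JEclass} and Theorem~\ref{thm:JE=TPM}). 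This uniqueness step is the main obstacle, and is presumably where the appendix proof does its real work.
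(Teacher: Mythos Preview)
Your derivation up to the row constraints
\[
\sum_k E'_k\, c_{ak} \;=\; \sum_k E'_k\, \langle a|U^\dagger P'_k U|a\rangle \qquad \text{for each } a
\]
is correct and matches the paper's Eq.~\eqref{JCTPM1} (after the same $-E_a$ cancellation you performed; the paper obtains it via a Vandermonde argument equivalent to your linear-independence step). You are also right that these $A$ row constraints, together with double stochasticity, do not determine the $c_{ak}$ uniquely---this is a genuine gap, and you correctly flagged it.

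Where your speculation misses is in the mechanism the paper uses to close the gap. It is \emph{not} a sharper form of the JE-class identities, nor the ``generic'' hypothesis. Instead, the paper introduces an \emph{additional physical assumption}, not stated in the theorem: that the POVM elements $M_{ak}$ do not depend on the final energy levels $E'_k$ within a small neighborhood (justified by the fact that $U$ is unchanged under last-instant perturbations of $H'$). With this assumption in hand, one simply differentiates the row identity with respect to $E'_k$ to obtain $c_{ak} = \langle a|U^\dagger P'_k U|a\rangle$ for every pair $(a,k)$, which is exactly what you wanted. So the ``real work'' in the appendix is not a clever use of the JE structure but rather the injection of this independence hypothesis, which turns an underdetermined linear system into one that can be resolved componentwise by varying the $E'_k$.
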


We emphasize that here we do not require Eq.~\eqref{eq:aver} to hold for nonthermal states. Theorem~\ref{thm:JE=TPM} shows that the TPM scheme is essentially uniquely determined by that, for thermal states, it reproduces the correct average work and Jarzynski equality. This relation can be expressed as JE~$\Rightarrow$~TPM. The reverse, namely, that the TPM scheme satisfies the JE (i.e., JE~$\Leftarrow$~TPM), has been known from the very conception of the scheme \cite{Kurchan_2000, Tasaki_2000}, and was in fact what motivated its experimental implementations \cite{Huber_2008, Batalhao_2014, An_2015}. Theorem~\ref{thm:JE=TPM} thus means that, basically, JE~$\Leftrightarrow$~TPM for state-independent schemes.

Note that this equivalence establishes a crucial connection between the main result of Ref.~\cite{Perarnau-Llobet_PRL_2017} and the fluctuation theorem, thereby significantly extending and generalizing the physical scope of the former. In Ref.~\cite{Perarnau-Llobet_PRL_2017}, it was proven that, if a scheme produces the same statistics as the TPM scheme on diagonal states, then it simply coincides with the TPM scheme. With Theorem~\ref{thm:JE=TPM}, we now know that the a priori less restrictive and more physically meaningful condition of satisfying energy conservation and JE on thermal states is already sufficient to ensure identicality with the TPM scheme.

\subsection{State-dependent schemes}
\label{sec:statedep}

Having excluded the possibility of simultaneously satisfying the conditions {\reqAa} and {\reqB} with work measuring schemes that are limited to being state-independent, let us explore the opportunities when no such limitation is posed. That is, when the work-measuring POVM, $\{ M_W \}$, and the corresponding set of outcomes, $\{W\}$, are allowed to depend not only on the process (i.e., $U$, $H$, and $H'$), but also on the system's initial state, $\rho$. These schemes comprise the largest set of measurement protocols allowed by quantum mechanics.

On the one hand, this dramatically increases the set of POVMs to choose from. On the other hand, this happens at the cost of abandoning ``universality''---in order to implement such a scheme, one has to acquire additional knowledge about incoming states and readjust the measurement setup accordingly, which is inconvenient from the practical standpoint.
Nonetheless, state-dependent schemes can be interesting from practical point of view. First of all, it not uncommon in experiments to have some information about how the system is prepared before the work exchange protocol is implemented, which means that one has some knowledge of the initial state. This is particularly relevant when the protocol is designed to extract work---one needs to at least partially know the initial state to be able to extract a nontrivial amount of work \cite{Allahverdyan_2004, Safranek_2023}. As regards the need to adjust the measurement device, it is standard practice in, e.g., adaptive metrology (see Refs.~\cite{Wiseman_1995, Armen_2002, OBrien_2009, Berni_2015} for an illustrative sampler), and should therefore be feasible also in our context. Lastly, a state-independent measurement performed on multiple copies of the system is equivalent to a state-dependent measurement performed on a single system; such multi-copy schemes for work measurement were explored in Refs.~\cite{Perarnau-Llobet_PRL_2017, Wu_2019, Wu_2020}.

\medskip

Expectedly, with this additional freedom, {\reqAa}, {\reqB}, and {\reqAb} become more compatible. In fact, by allowing for completely arbitrary state-dependent schemes, we can resolve the inconsistency problem altogether. For example, by implementing the following protocol: when $[\rho, H] = \nul$, apply the TPM scheme; when $[\rho, H] \neq \nul$, apply the HOW scheme.
Formally, this satisfies {\reqAa}, {\reqB}, and {\reqAb} simultaneously. However, this scheme is highly discontinuous. Indeed, if one adds an infinitesimal amount of coherence to a state that is diagonal in the energy eigenbasis, the statistics will experience a dramatic jump from having $A K$ outcomes $E'_k - E_a$ to having only $d$ outcomes $\Omega_i$, where $d$ is the system's Hilbert space dimension and $\Omega_i$ are the eigenvalues of $\Omega$ [see Eq.~\eqref{hatW:def}]. Importantly, $\{\Omega_i\}$ is not a subset of $\{E'_k - E_a\}$ whenever $[U^\dagger H' U, H] \neq \nul$. In addition to this discontinuity of the outcomes, a radical change of the probability distribution itself will also take place.

We deem such discontinuous situations pathological, especially because any measurement has a finite resolution (e.g., of how diagonal a state is). Moreover, we note that all the characteristics of the system and the process of our interest here depend on the initial state continuously. At the same time, it is natural to expect of the measurement that is able to faithfully track those characteristics to depend on them continuously. Thus, as the composition of continuous functions is also continuous, the said measurement shall depend on the initial state continuously.

We formalize this intuition by requiring the measurement schemes to depend on the state continuously, at least within a certain convex subset of all states. More specifically, let us consider a closed $\epsilon$-ball in the state space around the infinite-temperature ($\beta = 0$) state $\tau_0 = \id/d$: $\mathcal{B}_{\epsilon} := \{ \sigma \; : \,\; \Vert \sigma - \tau_0 \Vert < \epsilon \}$, where $\epsilon > 0$ and $\Vert \cdot \Vert$ is the standard operator norm (in $\ell_2$ space) \cite{hojo}. Let us moreover choose $\epsilon > 0$ such that all $\rho$'s in $\mathcal{B}_\epsilon$ are positive-definite---such an $\epsilon$ always exists as all the eigenvalues of $\tau_0$ are equal to $1/d > 0$, and the eigenvalues of a Hermitian operator depend continuously on the operator \cite{hojo}. Our main continuity assumption is that there exists such an $\epsilon_0 > 0$ for which all the elements $M_W$ of the POVM, and their corresponding outcomes $W$, are continuous functions of $\rho$ in $\mathcal{B}_{\epsilon_0}$. If the scheme is continuous on the set of all density matrices, then of course the choice of $\epsilon_0$ will be arbitrary.

In Appendix~\ref{app:nogoproof}, we prove that requiring the POVM elements and their outcomes to be differentiable with respect to $\rho$, which is a bit stronger than mere continuity, is so restrictive that the following theorem, which is the main result of this paper, holds.

\begin{theorem}[No-go for state-dependent schemes] \label{thm:no-go}
No measurement scheme that is described by a POVM the elements and outcomes of which are at least once differentiable in $\rho$ in some open ball $\mathcal{B}_{\epsilon_0}$ centered at $\tau_0$, are at least twice differentiable with respect to $U$ at $U = \id$, and satisfy two mild structural constraints~\ref{cons:S1} and~\ref{cons:S2} described in Appendix~\ref{app:nogoproof}, can simultaneously satisfy {\reqAa}, {\reqAb}, and {\reqB}.
\end{theorem}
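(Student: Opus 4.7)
The plan is to reduce Theorem~\ref{thm:no-go} to a perturbative form of Theorem~\ref{thm:indep_no-go} by Taylor expanding the scheme in the state $\rho$ around $\tau_0 = \id/d$ and in the unitary $U$ around $\id$ (holding $H' = H$ throughout this local analysis), and to argue that {\reqAa}, {\reqAb}, and {\reqB} constrain the low-order Taylor coefficients in mutually incompatible ways. Since $\tau_\beta \in \mathcal{B}_{\epsilon_0}$ for all sufficiently small $\beta > 0$, the once-differentiability in $\rho$ gives access to the Jarzynski constraint perturbatively near $\tau_0$, while the twice-differentiability in $U$ at $\id$ is what lets one probe the second moment of work for processes infinitesimally close to the trivial one.

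Concretely, for $\rho \in \mathcal{B}_{\epsilon_0}$ and $U$ near $\id$ I would write
\begin{align*}
M_W(\rho, U) &= M_W^{(0)}(U) + \mathcal{L}_W^U[\rho - \tau_0] + o(\|\rho - \tau_0\|),\\
W(\rho, U) &= W^{(0)}(U) + w^U[\rho - \tau_0] + o(\|\rho - \tau_0\|),
\end{align*}
where $\{(M_W^{(0)}(U), W^{(0)}(U))\}$ is a state-independent POVM-plus-outcomes family; Constraints~\ref{cons:S1} and~\ref{cons:S2} are presumably what ensure that the outcome labels can be chosen consistently on $\mathcal{B}_{\epsilon_0}$ and that the linear maps $\mathcal{L}_W^U$ and functionals $w^U$ respect the POVM structure. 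Substituting into {\reqB} at $\rho = \tau_\beta$ and expanding in $\beta$ (using $\tau_\beta - \tau_0 = -\beta(H - (\tr H/d)\id)/d + O(\beta^2)$) forces $\{(M_W^{(0)}(U), W^{(0)}(U))\}$ to itself obey the Jarzynski equality and hence to lie in the JE class characterized by Lemma~\ref{thm:JEclass}; and {\reqAa} at $\rho = \tau_0$ pins its average work to the correct value. Condition {\reqAb} then fixes $M_0^{(0)}(\id) = \id$ and $W^{(0)}(\id) = 0$, and writing $U = e^{-iK}$ and twice-differentiating both {\reqAa} and {\reqB} at $K = 0$ yields constraints that involve, respectively, $i\tr(\rho[K,H])$ and $\beta^2 \langle W^2\rangle_{\tau_\beta}$. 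Combining these with the rigid Lemma~\ref{thm:JEclass} structure already forced on the zeroth-order scheme over-determines the first-order $\rho$-derivatives $\mathcal{L}_W^U$, $w^U$ and produces the contradiction whenever $[K, H] \neq 0$, reproducing the mechanism behind Theorem~\ref{thm:indep_no-go}: the JE class is too rigid to accommodate the correct first moment on all states, and the permissible $\rho$-linear corrections are not free enough to repair this.

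The main obstacle I anticipate is the bookkeeping of outcomes as functions of $(\rho, U)$: a generic smooth family of POVMs might split, merge, or re-label outcomes along the parameter path, and Taylor expansions only make sense once a rigid local labeling is available. I expect this is precisely the purpose of Constraints~\ref{cons:S1} and~\ref{cons:S2}, which should supply enough combinatorial rigidity to identify outcomes across nearby $(\rho, U)$ and let one equate Taylor coefficients unambiguously; a secondary but related subtlety is preserving the discrete outcome structure of Lemma~\ref{thm:JEclass} when differentiating in $\rho$, which again is where the structural constraints are expected to do the heavy lifting.
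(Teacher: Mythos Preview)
Your reduction strategy has a genuine gap. You claim that expanding {\reqB} in $\beta$ around $\beta = 0$ ``forces $\{(M_W^{(0)}(U), W^{(0)}(U))\}$ to itself obey the Jarzynski equality,'' but this does not follow. When you substitute $\rho = \tau_\beta$ into your expansion $M_W(\rho,U) = M_W^{(0)}(U) + \mathcal{L}_W^U[\rho-\tau_0] + \cdots$, the $\beta$-derivatives of the JE condition at $\beta=0$ involve \emph{both} $M_W^{(0)}$ and the correction maps $\mathcal{L}_W^U$, $w^U$ on an equal footing; the zeroth-order-in-$\rho$ scheme alone is constrained only by the $\beta\to 0$ limit of the JE, which is just normalization. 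Nothing pins $\{(M_W^{(0)}, W^{(0)})\}$ to the JE class of Lemma~\ref{thm:JEclass}, so the hoped-for reduction to Theorem~\ref{thm:indep_no-go} does not materialize. Indeed, state-dependent corrections are precisely what let schemes evade that theorem: the construction in Appendix~\ref{app:scheme} already satisfies {\reqAa} and {\reqB} simultaneously, so any argument that treats the zeroth-order scheme as if it were caught by Theorem~\ref{thm:indep_no-go} must be using more than what you have written.

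You have also misidentified the role of Constraints~\ref{cons:S1} and~\ref{cons:S2}. They are not bookkeeping devices for outcome labeling; they are algebraic conditions (realness of the leading POVM coefficients $m_W$ in the eigenbasis of $H$ when $\rho$ and $U$ are real there, and nonnegativity of a certain anticommutator expectation) used inside an operator Cauchy--Schwarz estimate. The paper's proof works at a \emph{fixed} $\beta_0 > 0$ with $\tau_{\beta_0}\in\mathcal{B}_{\epsilon_0}$, expands to second order in the unitary parameter $x$ (with $U=e^{-ixh}$, $H'=H$, $\omega=i[h,H]$), and derives from {\reqAa}, {\reqAb}, and the two constraints a \emph{lower} bound $\sum_W V_W^2\langle m_W\rangle_\tau \geq \langle\omega^2\rangle_\tau$, while {\reqB} at second order in $x$ gives an \emph{upper} bound $\sum_W V_W^2\langle m_W\rangle_\tau \leq \beta_0^{-1}\langle i[h,\omega]\rangle_\tau$. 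The contradiction is then the explicit analytical inequality $\beta_0\langle\omega^2\rangle_\tau > \langle i[h,\omega]\rangle_\tau$, proven directly. None of this structure---the Cauchy--Schwarz step, the role of the constraints in it, or the final inequality---appears in your outline.
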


The structural constraints~\ref{cons:S1} and~\ref{cons:S2} essentially boil down to requiring that the only source of complexness of the matrices $M_W$ in the eigenbasis of $H$ can be the noncommutativity of $U$ and $\rho$ with $H$. All work measurement schemes known to the authors satisfy these constraints.

While requiring continuity and differentiability can hardly be considered a limitation in physics, Constraints~\ref{cons:S1} and~\ref{cons:S2}, however mild they might seem (see Appendix~\ref{app:nogoproof}), do leave a narrow perspective open for the existence of a state-continuous scheme that could simultaneously satisfy {\reqAa}, {\reqAb}, and {\reqB}. Of course, there might simply be a more general proof of incompatibility that would not require the Constraints~\ref{cons:S1} and~\ref{cons:S2}. We defer the explicit description of these constraints to Appendix~\ref{app:nogoproof} because there is a certain amount of notation that needs to be established for stating them.

Importantly, while not being able to provide full compatibility of {\reqAa}, {\reqAb}, and {\reqB}, continuously state-dependent schemes do make those conditions more compatible. Indeed, with state-independent schemes, we could simultaneously satisfy {\reqAa} and {\reqAb} (the HOW scheme) as well as {\reqAb} and {\reqB} (the TPM scheme), but not {\reqAa} and {\reqB}. Now, with the additional freedom, in Appendix~\ref{app:scheme}, by combining the ordinary (``forward'') TPM scheme with what we call ``reverse'' TPM scheme, we construct a state-dependent measurement that simultaneously satisfies {\reqAa} and {\reqB} (but not {\reqAb}). Thus, all the requirements are pairwise compatible when continuous state-dependent POVMs are allowed. While this paper was in preparation, three more state-dependent schemes simultaneously satisfying {\reqAa} and {\reqB} (but, again, not {\reqAb}) were reported in Refs.~\cite{Gherardini_2021, Micadei_2021, Beyer_2022}.

Complementing Theorem~\ref{thm:no-go}, the following lemma shows that, strikingly, the two manifestations of the energy conservation, {\reqAa} and {\reqAb}, are incompatible in a wide range of practically relevant situations.

\begin{lemma} \label{thm:AaAbClashIndep}
If, for some state $\rho$ with nonzero coherences ($[\rho, H] \neq \nul$), the outcomes of a work measuring scheme do not depend on the evolution operator $U$, then the scheme cannot simultaneously satisfy conditions {\reqAa} and {\reqAb}, provided the POVM elements of the scheme are differentiable at least once with respect to $U$ at $U = \id$.
\end{lemma}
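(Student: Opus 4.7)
The plan is to fix a state $\rho$ with $[\rho,H]\neq 0$ satisfying the hypothesis, and to pit the ``untouched'' scenario (where \reqAb bites) against an infinitesimal variation of $U$ away from the identity (where \reqAa bites). Throughout, I set $H'=H$ and consider the one-parameter family $U_s=e^{-isA}$, with $A$ Hermitian and $U_0=\id$. Since the outcomes are $U$-independent for this $\rho$ by hypothesis, I label them $\{W_i\}$ once and for all, and the probabilities $p_i(s):=\tr(\rho M_{W_i}(s))$ inherit differentiability at $s=0$ from the assumed differentiability of the POVM in $U$ at $\id$.

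At $s=0$, the Hamiltonian is constant in time and $U=\id$, so \reqAb forces $p_0(0)=1$ and $p_i(0)=0$ for every $W_i\neq 0$. Each such $p_i(s)$ is non-negative on both sides of $s=0$ and attains its global minimum there, so Fermat's theorem yields $\dot p_i(0)=0$ for every $W_i\neq 0$. I then differentiate the \reqAa identity $\sum_i W_i\,p_i(s)=\tr(\rho\,\Omega(s))$, with $\Omega(s)=U_s^\dagger H U_s-H$, at $s=0$. Because outcomes do not depend on $U$, the LHS reduces to $\sum_i W_i\,\dot p_i(0)$, which vanishes: the $W_i=0$ term is trivially zero and every other term is killed by the previous step. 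A direct computation of $\dot\Omega(0)=i[A,H]$ turns the RHS into $\tr(\rho\,\dot\Omega(0))=i\tr(\rho[A,H])=i\tr(A[H,\rho])$.

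It remains to make this RHS nonzero by picking $A$. Since $[H,\rho]$ is anti-Hermitian, the choice $A=i[H,\rho]$ is itself Hermitian, and substitution gives RHS $=-\tr([H,\rho]^2)$, which is strictly positive because $[H,\rho]$ is a nonzero anti-Hermitian operator. The resulting contradiction with LHS $=0$ closes the argument. The delicate point I would be most careful about is the minimum argument: the scalar constraint $p_0(0)=1$ does not by itself pin down $M_0(0)$ as the identity operator when $\rho$ is rank-deficient, so one must reason at the level of the scalar probabilities $p_i$ rather than the POVM elements. Applying Fermat's theorem pointwise to each $p_i$ separately is what makes the whole argument go through with only first-order differentiability of the POVM, and sidesteps any need to know the global operator structure of the scheme.
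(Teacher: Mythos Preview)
Your argument is correct and follows essentially the same route as the paper's: the paper deduces the lemma as a corollary of a slightly more general statement (Lemma~\ref{thm:AaAbClash}), but the core mechanism is identical---use {\reqAb} at $U=\id$ to force $p_W(0)=0$ for every nonzero outcome, invoke nonnegativity plus once-differentiability to conclude $\dot p_W(0)=0$ (your Fermat step is exactly the paper's argument around Eqs.~\eqref{peano}--\eqref{probabli}), and then contrast with the first-order term $\tr(\rho\,i[h,H])$ coming from {\reqAa}. Your explicit choice $A=i[H,\rho]$ to witness $\tr(\rho[A,H])\neq 0$ is a nice concrete touch that the paper leaves implicit.
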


Lemma~\ref{thm:AaAbClashIndep} follows as a corollary from Lemma~\ref{thm:AaAbClash} in Appendix~\ref{app:nogoproof}. It covers a large class of schemes that includes the TPM scheme, the multi-copy schemes in Refs.~\cite{Perarnau-Llobet_PRL_2017, Wu_2019, Wu_2020}, the scheme we introduced in Appendix~\ref{app:scheme}, and those later proposed in Refs.~\cite{Gherardini_2021, Micadei_2021}. This conflict between the two manifestations of energy conservation, {\reqAa} and {\reqAb}, opens up a completely new perspective on the limited applicability of these schemes for states with coherences in the energy eigenbasis.

Notably, Lemma~\ref{thm:AaAbClashIndep} does not cover the HOW scheme, as all its outcomes depend on $U$; and indeed, it satisfies both {\reqAa} and {\reqAb} by design, but, as discussed in Sec.~\ref{sec:setting}, it does not satisfy {\reqB}. In fact, the HOW scheme is not covered also by the stronger Lemma~\ref{thm:AaAbClash} (see Appendix~\ref{app:nogoproof}).

In a sense, the strategy of our proof of Theorem~\ref{thm:no-go} in Appendix~\ref{app:nogoproof} is inspired by this example, in that we start with requiring {\reqAa} and {\reqAb}, and then show that the structure enforced by them cannot be made compatible with {\reqB}. 

\medskip

To give a non-trivial example to apply Theorem~\ref{thm:no-go} to, let us, inspired by Ref.~\cite{Yukawa_2000}, introduce the state-dependent operator of work
\begin{align} \label{rotow}
\Upsilon_\rho = - \hat{\beta}(\rho)^{-1} \ln \! \big[ e^{\hat{\beta}(\rho) H/2} e^{-\hat{\beta}(\rho) U^\dagger H' U} e^{\hat{\beta}(\rho) H /2} \big], ~
\end{align}
where $\hat{\beta}(\rho) := \arg\min\limits_\beta \Vert \rho - \tau_\beta \Vert$. This defines an ``operator scheme''---the POVM elements are the eigenprojectors of the operator, and the outcomes are the corresponding eigenvalues. This scheme is obviously designed to satisfy {\reqB} and, when $[U^\dagger H' U, H] = \nul$, $\Upsilon_\rho$ coincides with the Heisenberg operator of work $\Omega$ [defined in Eq.~\eqref{hatW:def}], which also means that $\Upsilon_\rho$ has a proper classical limit. One also immediately notes that this scheme satisfies {\reqAb}. In Appendix~\ref{app:rotateW}, we show that this scheme is indeed covered by Theorem~\ref{thm:no-go}, and hence, it cannot satisfy {\reqAa} because it already satisfies {\reqAb} and {\reqB}. And indeed, in Appendix~\ref{app:rotateW}, we explicitly show that $\Upsilon_\rho$ violates {\reqAa} even for thermal states. In addition to that, to mark its debut, we prove some other curious facts about $\Upsilon_\rho$, unrelated to the main point of this paper.


\section{Discussion}

In this paper, we unearthed the fundamental reason why, despite numerous attempts in the literature, no satisfactory measurement scheme for quantum work has been found so far. By taking an abstract approach, we asked whether any quantum measurement, defined as broadly as possible, can produce statistics that are consistent with energy conservation for the unmeasured system and Jarzynski equality.

These conditions ensure that the measurement results reflect the reality of the unmeasured system. Say, nothing happens to the system---there are no systems with which it could exchange energy. Even without measuring the system, energy conservation law already tells us that, with probability $1$, exactly $0$ work is performed on the system. Therefore, any meaningful measurement of work, when applied to the system, should output $0$, with probability $1$, even if the measurement process disrupts the system. This was our condition {\reqAb}.

By the same logic, when we take a system in a known state and drive it unitarily according to a protocol we fully control, then, even without measuring the system, energy conservation tells us that the average work performed on the system is equal to the difference between the final and initial average energies of the system. Thus, we imposed the condition {\reqAa} that a proper measurement of work should output a random variable the first moment of which coincides with that average, no matter how disruptive the measurement may be.

Note that {\reqAa} and {\reqAb} are not conditions about the measurement process itself. Namely, we do not ask whether energy is conserved during the joint evolution of the system, apparatus, and external driving agents.

The main result of this paper, Theorem~\ref{thm:no-go}, is that no reasonable quantum measurement can output such a random variable that would respect the above two manifestations of energy conservation and, at the same time, satisfy the Jarzynski equality whenever the system is in a thermal state (condition {\reqB}). This uncovers a deep connection between energy conservation law and fluctuation theorems that is present \textit{only} in the quantum regime.

Our proof merely assumes (i) differentiability of the measurement operators and outcomes with respect to the state and (ii) that, in the eigenbasis of $H$, complexness in $M_W$ can originate only from noncommutativity of $U$ and $\rho$ with $H$. While (i) can hardly constitute a loophole in any conceivable physical situation, breaking (ii) could in principle provide a leeway for devising a meaningful work measurement scheme. The possibility of finding a more general proof that would not require (ii) is not excluded and is an interesting problem for future research.

Exploring the extent to which state-dependent schemes can be helpful in the related problems of quantum back-action evasion \cite{Wu_2019, Wu_2020}, measuring fluctuations of heat \cite{Levy_2020, Mohammady_2020}, current \cite{Nazarov_2003, Hofer_2017, Hovhannisyan_2019}, entropy production \cite{Elouard_2017, Manzano_2018, Chiara_2022}, and energetic cost of measurements \cite{Sagawa_2009, Abdelkhalek_2016, Elouard_2017, Mohammady_2019, Guryanova_2020, Mohammady_2022}, is an interesting research avenue.

\section*{\hspace{-5mm} Acknowledgments}

We thank Raffaele Salvia for useful feedback on the manuscript. K.V.H. also thanks Armen Allahverdyan, Antonio Ac\'{i}n, and Mart\'{i} Perarnau-Llobet for insightful discussions during early stages of this work. K. V. H. acknowledges support by the University of Potsdam startup funds. A. I. acknowledges support from the Novo Nordisk Foundation NERD grant (Grant no. NNF22OC0075986).

\bigskip

\appendix

\setcounter{equation}{0}
\renewcommand{\theequation}{\thesection\arabic{equation}}
\setcounter{theorem}{0}
\renewcommand{\thetheorem}{\thesection.\arabic{theorem}}
\setcounter{lemma}{0}
\renewcommand{\thelemma}{\thesection.\arabic{lemma}}
\setcounter{proposition}{0}
\renewcommand{\theproposition}{\thesection.\arabic{proposition}}

\section{Proofs of Lemma~\ref{thm:JEclass} and Theorem~\ref{thm:indep_no-go}}
\label{app:Lemma1proof}

To prove Lemma~\ref{thm:JEclass}, let us rewrite Eq.~\eqref{eq:jarz} as
\begin{align} \label{hac}
\sum_W \tr(M_W e^{-\beta H})e^{-\beta W} = \tr e^{-\beta H'}\quad \forall \beta,
\end{align}
where $\{ M_W \}_W$ is a POVM that does not depend on the state of the system, but may depend on the process (i.e., $H$, $H'$, and $U$).

Recalling the eigenresolutions $H = \sum_{a=1}^A E_a P_a$ and $H' = \sum_{k=1}^K E'_k P'_k$ and introducing the degeneracies
\begin{align}
g_a := \tr P_a \qquad \mathrm{and} \qquad g'_k := \tr P'_k,
\end{align}
we obtain from Eq.~\eqref{hac} that
\begin{align} \label{lavash}
\sum_{W, a} \tr(M_W P_a)e^{-\beta (E_a + W)} = \sum_{k=1}^K g'_k e^{-\beta E'_k}
\end{align}
must hold for all $\beta$'s. Whence, using the expansion $e^x = \sum_{N=0}^\infty \frac{x^N}{N!}$, we find that
\begin{align} \label{JC2}
\sum_{W, a} \tr(M_W P_a)(E_a + W)^N = \sum_{k=1}^K g'_k (E'_k)^N 
\end{align}
for all $N \in \mathbb{N}$.

Keeping in mind that everything above is invariant under a global constant energy shift, we choose the ground state of $H'$ to have zero energy. Moreover, since the eigenresolution $H' = \sum_{k=1}^K E'_k P'_k$ already accounts for possible degeneracies, all values of $E'_k$ are distinct. Therefore, the order of the eigenvectors of $H'$ can be chosen such that $0 = E'_1 < E'_2 \cdots < E'_K$. Let us now take $N = 2 L$, $L \in \mathbb{N}$, and divide Eq.~\eqref{JC2} by $(E'_K)^N$. Then, as $L \to \infty$, the right-hand side will converge to $g'_K$. Now, if, for all $a$ and $W$, $|E_a + W| < E'_K$, then the left-hand side will converge to $0$, which cannot be. Likewise, there cannot exist such $a$ and $W$ for which $|E_a + W| > E'_K$, because otherwise the left-hand side would diverge. Therefore, there have to exist at least one pair of $a$ and $W$ such that $|E_a + W| = E'_K$, and $|E_{b} + W'| \leq |E_a + W|$ for all other pairs $b$ and $W'$. And since the equality between the left- and right-hand sides of Eq.~\eqref{JC2} is maintained in the $L \to \infty$ limit, it also holds that
\begin{align} \label{degen}
\sum_{|E_a + W| = E'_K} \tr(M_W P_a) = g'_K.
\end{align}
Moreover, if there exist such $b$ and $W'$ that $E_b + W' = - E'_K$, then necessarily $\tr(M_{W'} P_b) = 0$. Indeed, taking $N = 2 L + 1$, dividing Eq.~\eqref{JC2} by $(E'_K)^N$, and taking the $L \to \infty$ limit, we will obtain the same $g'_K$ on the right-hand side, but $\tr(M_{W'} P_b)$ will enter the left-hand side with a negative sign, which means that equality between the left- and right-hand sides is possible only if $\tr(M_{W'} P_b) = 0$. In other words, for all $N \in \mathbb{N}$, $g'_K (E'_K)^N$ is equal to the sum of all the terms on the left-hand side with maximal $E_a + W$. Thus, for any $N$, we can eliminate those terms from both sides of Eq.~\eqref{JC2} and reiterate the arguments above. Doing so $K$ times, we will conclude that the set of outcomes of work coincides with energy differences. 

This procedure also shows that, if $E_a + W$ does not coincide with any of $E'_k$'s, then $\tr(M_W P_a) = 0$ [check Eq.~\eqref{JC2}]; this proves Eq.~\eqref{mortadella}. Finally, the fact that Eq.~\eqref{degen} is obtained at each step of the iteration (i.e., for all values of $k$) proves Eq.~\eqref{bresaola}, which thereby concludes the proof of Lemma~\ref{thm:JEclass}.

\medskip

Now, let us prove Theorem~\ref{thm:indep_no-go}. For a state-independent scheme to satisfy {\reqAa}, Eq.~\eqref{eq:aver} must be satisfied for any state $\rho$, which means that $\sum_W W M_W = \Omega$ must hold.

On the other hand, Lemma~\ref{thm:JEclass}, and more specifically, Eq.~\eqref{mortadella}, rather strongly restricts the set of eigensubspaces of $H$ among which the matrices $M_{ak}$ are allowed to have coherences. So much so that, for any nontrivial $H$ and $H'$, there exists a $U$ such that $\Omega$ has coherences where $\sum_W W M_W$ cannot.

For brevity of presentation, let us conduct the proof of this fact for the case of nondegenerate set of work outcomes, namely, when
\begin{align} \label{doublenondeg}
E'_k - E_a = E'_m - E_b \;\; \Longrightarrow \;\; k = m \;\; \mathrm{and} \;\; a = b. ~~~
\end{align}
Note that this case is the generic one---those $H$ and $H'$ for which the set of energy differences has degeneracies constitute a $0$-measure subset in the space of all $H$'s and $H'$'s.

Now, for this configuration, Eq.~\eqref{mortadella} simply states that $\tr(M_{a k} P_b) = 0$ when $b \neq a$, which means that, for any $k$, $M_{a k}$ belongs to the $a$'th eigensubspace of $H$; in other words,
\begin{align} \label{JC4}
M_{a k} = P_a M_{a k} P_a.
\end{align}
Hence $[M_{a k}, H] = \nul$ for any $a$ and $k$, and therefore $\sum_W W M_W$ commutes with $H$. Thus, $\sum_W W M_W$ cannot be equal to $\Omega$ whenever $[U^\dagger H' U, H] \neq \nul$.

Lastly, note that, with Eq.~\eqref{JC4} taken into account, Eq.~\eqref{bresaola} simplifies to
\begin{align} \label{JC5}
\sum_a \tr M_{a k} = \tr P'_k. 
\end{align}
A special case of this equation was proven in Ref.~\cite{Ito_2019} for a certain class of two-point energy measurement schemes.

\section{Proof of Theorem~\ref{thm:JE=TPM}}
\label{app:proofJE=TPM}

To prove Theorem~\ref{thm:JE=TPM}, let us, taking into account Eqs.~\eqref{JC3} and~\eqref{JC4}, rewrite Eq.~\eqref{eq:aver} with $\rho = \tau_\beta$ as
\begin{align} \nonumber
\sum_a e^{-\beta E_a} \big[\sum_k E'_k & \tr(P_a U^\dagger P'_k U) - g_a E_a \big]
\\ \nonumber
&= \sum_{a,k} e^{-\beta E_a} (E'_k - E_a) \tr M_{a k},
\end{align}
which has to hold for any $\beta$. Keeping in mind that no two $E_a$'s are the same, and setting $\beta = 0, 1, \dots, A - 1$, we will arrive at a linear system of equations with the coefficient matrix being a Vandermonde matrix \cite{hojo} with a non-zero determinant. Hence, for any $a$,
\begin{align} \label{JCTPM1}
\sum_k (E'_k - E_a) \tr(P_a U^\dagger P'_k U) = \sum_k (E'_k - E_a) \tr M_{a k},
\end{align}
where we have noted that $\sum_k \tr(P_a U^\dagger P'_k U) = g_a$.

At this point, we assume that the operators $M_{a k}$ do not depend on $E'_k$'s within a small neighborhood. This assumption is natural for two reasons. First, the unitary evolution operator, $U = \overrightarrow{\exp}\big[- i \int_{t_{\mathrm{in}}}^{t_{\mathrm{fin}}}dt H(t)\big]$ does not change when values of $H(s)$ are changed on a measure-zero set, and since changing the eigenvalues of $H' = H(t_{\mathrm{fin}})$ is such an operation, $U$ remains intact. Second, since $0 = E'_1 < E'_2 < \cdots < E'_K$ and $K$ is finite, there exists an $\epsilon > 0$ such that the $\epsilon$-neighborhoods of $E'_k$'s are non-overlapping; we will assume $E'_k$-independence in these neighborhoods. Now, before the last-moment change in the eigenvalues of $H'$, the process runs without ``knowing'' about the change, so if $M_{a k}$'s were going to measure probabilities of work outcomes for $E'_k$'s, then a small variation that preserves the order of $E'_k$'s should not impact the measurement probabilities and hence the operators.

With this assumption, we can differentiate Eq.~\eqref{JCTPM1} with respect to $E'_k$'s, with all other parameters fixed, to find that
\begin{align} \label{JCTPM2}
\tr M_{a k} = \tr(P_a U^\dagger P'_k U)
\end{align}
for all $a$'s and $k$'s.

When the initial Hamiltonian has no degeneracies, i.e., $P_a = \ket{a} \bra{a}$ for all $a$'s, then Eq.~\eqref{JC4} implies that $M_{a k} \propto \ket{a} \bra{a}$, and Eq.~\eqref{JCTPM2} shows that
\begin{align}
M_{a k} = \ket{a} \bra{a} U^\dagger P'_k U \ket{a} \bra{a} = M_{a k}^{\mathrm{TPM}},
\end{align}
where the second equality is due to Eq.~\eqref{salame}. Hence, for nondegenerate $H$'s, the JE class consists of the TPM scheme \textit{only}. For degenerate $H$'s, the statistics produced by the schemes in the JE class are guaranteed to coincide with those of the TPM scheme \textit{only} on diagonal initial states of the form $\rho = \sum_a \rho_a P_a$ [which can be easily checked by using Eqs.~\eqref{JC4},~\eqref{JCTPM2}, and~\eqref{tpem}]. Whenever $\rho$ is not proportional to the identity operator in one of the eigensubspaces of $H$, there will exist a POVM that is in the JE class, satisfies Eq.~\eqref{eq:aver} for thermal states, but is nevertheless such that $\tr(\rho M_{a k}) \neq p_{a k}^{\mathrm{TPM}}$.

We emphasize that the $H$'s and $H'$'s for which the JE class is larger than the TPM scheme comprise a $0$-measure subset of the set of all possible $H$'s and $H'$'s. Which in particular means that, for any $H$ and $H'$ yielding a degenerate set of energy difference, there exist infinitesimal perturbations of $H$ and $H'$ that remove the degeneracy. Therefore, if we ask which schemes satisfy the JE and average work condition for thermal initial states for all processes at least in an arbitrarily small neighborhood of a given process, then the TPM scheme will be the only such scheme. In this sense, the JE class is equivalent to the TPM scheme.


\section{Proof of Theorem~\ref{thm:no-go}}
\label{app:nogoproof}

To prove that certain requirements are incompatible, it is sufficient to prove that there exists at least one state and one process for which those requirements entail contradicting results.

We will look for such a state within the very same ball $\mathcal{B}_{\epsilon_0}$ in which the scheme is continuous. The class of processes among which we will look for the desired process will be cyclic Hamiltonian processes ($H' = H$), the evolution operator generated by which has the form $U = e^{-i x h}$, where $x$ is a real number and $h$ is a Hermitian operator. We will call this class of processes $\mathbf{\Pi}$:
\begin{align} \label{processes}
\mathbf{\Pi} = \big\{ (H, H', U) \; : \,\; H' = H, \; U = e^{- i x h} \big\},
\end{align}
where $h$ spans all finite-norm Hermitian operators living in the system's Hilbert space. This class of processes will allow us to conveniently include {\reqAb} in our analysis, because, for any process in $\mathbf{\Pi}$, we can make the system untouched by simply taking the $x \to 0$ limit. Conveniently, the system becomes untouched also in the limit $h \to H$.

Let us now explore the consequences of imposing {\reqAb}. To do so, for each $\rho$ and $h$, we define
\begin{align} \label{setO}
\mathcal{O}_{\rho, h} = \big\{ W \; : \,\; \lim_{x \to 0} W = 0 \big\}
\end{align}
to be the set of outcomes $W$ that are either $0$ or tend to $0$ as $x \to 0$. The rest of the outcomes will tend to non-zero values in the same limit; we denote the set of those outcomes by $\mathcal{N}_{\rho, h}$:
\begin{align} \label{setN}
\mathcal{N}_{\rho, h} = \big\{ W \; : \,\; \lim_{x \to 0} W \neq 0 \big\}.
\end{align}
Then, the requirement {\reqAb} means that, in the $x \to 0$ limit,
\begin{align} \label{probi}
\sum_{W \in \mathcal{N}_{\rho, h}} p_W \to 0 \quad \mathrm{and} \quad \sum_{W \in \mathcal{O}_{\rho, h}} p_W \to 1,
\end{align}
where $p_W = \tr(\rho M_W)$, as in Eq.~\eqref{eq:POVM}. For positive-definite states ($\rho > \nul$), since all their eigenvalues are strictly positive, Eq.~\eqref{probi} entails
\begin{align} \label{povmi}
\sum_{W \in \mathcal{N}_{\rho, h}} M_W \to \nul \quad \mathrm{and} \quad \sum_{W \in \mathcal{O}_{\rho, h}} M_W \to \id.
\end{align}
We emphasize that, in general, like $M_W$'s, the outcomes $W$ depend on $H$, $h$, $x$, and $\rho$.

If $M_W$ is differentiable at least once with respect to $x$ at $x = 0$, then $p_W = \tr(M_W \rho)$ will also be differentiable in $x$ at $x = 0$. Hence, we can write the Taylor series with the remainder in the Peano form:
\begin{align} \label{peano}
p_W(x) = p_W(0) + x p'_W(0) + o(x),
\end{align}
where the small-o is as per the standard asymptotic notation.

Now, if $p_W \to 0$ as $x \to 0$, then $p_W(0) = 0$. Moreover, $p'_W(0)$ will also have to be zero, because otherwise, for sufficiently small $x$'s, $p_W$ would change its sign when $x \to -x$, whereas $p_W$ must always remain non-negative. Therefore, while maintaining nonnegativity,
\begin{align} \label{probabli}
\lim_{x \to 0} p_W = 0 \quad \Longrightarrow \quad p_W = o(x)
\end{align}
With exactly the same reasoning,
\begin{align} \label{povmabli}
\lim_{x \to 0} M_W = \nul \quad \Longrightarrow \quad M_W = o(x).
\end{align}
As discussed above, $\lim_{x \to 0} p_W = 0$ necessitates $\lim_{x \to 0} M_W = \nul$ whenever $\rho$ is positive-definite. Note that $o(x)$ is allowed to be anything from $x^2 \tilde{m}_W$, with some $\tilde{m}_W \geq \nul$, like in the TPM scheme, to something exotic, say, $|x|^{3/2} \tilde{m}_W$.

Let us now see what {\reqAa} entails for processes in $\mathbf{\Pi}$. Recalling Eqs.~\eqref{eq:aver} and~\eqref{hatW:def}, namely that $\av{W} = \tr(\Omega \rho)$, we can apply the Baker--Hausdorff lemma \cite{Baker_1905} to expand $\Omega$ as
\begin{align} \label{HOWsdorff}
\Omega = x \omega + i \frac{x^2}{2} [h, \omega] + O(x^3),
\end{align}
for $x \ll 1$, where we have introduced the operator
\begin{align} \label{what}
\omega := i [h, H],
\end{align}
which leads us to
\begin{align} \label{eq:avWexpand}
\av{W} = x \tr(\rho \, \omega) + i \frac{x^2}{2} \tr (\rho [h, \omega]) + O(x^3).
\end{align}

At this point, we are ready to prove the following lemma that establishes incompatibility between {\reqAa} and {\reqAb}---the two aspects of energy conservation---in special, but practically relevant, circumstances.

\begin{lemma} \label{thm:AaAbClash}
If, for a state-dependent scheme $\{(M_W, W)\}_W$, there exist a state $\rho$ and a process in $\mathbf{\Pi}$, such that $\tr(\rho [h, H]) \neq 0$, for which the outcomes $W$ either do not go to zero as $x \to 0$ (but also do not diverge), or go to zero as $o(x)$, then the scheme cannot simultaneously satisfy {\reqAa} and {\reqAb}.
\end{lemma}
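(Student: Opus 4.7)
My plan is to derive two conflicting asymptotic expressions for the expected work $\av{W}$ in the limit $x \to 0$ of the one-parameter family of cyclic processes in $\mathbf{\Pi}$, under the assumption that both {\reqAa} and {\reqAb} hold for the given $\rho$ and $h$. One expression, coming from {\reqAa} together with the Baker--Hausdorff expansion already recorded in Eq.~\eqref{eq:avWexpand}, forces the coefficient of $x$ in $\av{W}$ to be nonzero. The other, coming from {\reqAb} together with the hypothesis on the outcomes, forces $\av{W}$ to vanish strictly faster than $x$. Comparing the two yields the contradiction.

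For the first expansion, I would simply read off Eq.~\eqref{eq:avWexpand} to get
\begin{align}
\av{W}(x) = x\,\tr(\rho\,\omega) + O(x^2),
\end{align}
where $\omega = i[h, H]$. Because $[h, H]$ is anti-Hermitian, $\tr(\rho[h,H])$ is purely imaginary, and hence $\tr(\rho\,\omega) = i\tr(\rho[h,H])$ is real; by the standing hypothesis $\tr(\rho[h, H]) \neq 0$, this real coefficient is nonzero. Thus {\reqAa} pins the small-$x$ behavior of $\av{W}$ down to exact order $x$, with an explicit nonzero prefactor.

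For the second expansion, I would split the expectation along the partition of outcomes into $\mathcal{N}_{\rho,h}$ and $\mathcal{O}_{\rho,h}$ introduced in Eqs.~\eqref{setN} and~\eqref{setO},
\begin{align}
\av{W}(x) = \sum_{W\in\mathcal{N}_{\rho,h}} W\,p_W + \sum_{W\in\mathcal{O}_{\rho,h}} W\,p_W,
\end{align}
and bound each piece separately. For the $\mathcal{N}$ sum, the outcomes have finite nonzero $x \to 0$ limits, so they are bounded in a neighborhood of $x = 0$, while the probabilistic consequence of {\reqAb} captured by Eqs.~\eqref{probi} and~\eqref{probabli} gives $p_W = o(x)$ for each $W\in\mathcal{N}_{\rho,h}$ (note that nonnegativity of each $p_W$ combined with $\sum_{W\in\mathcal{N}} p_W \to 0$ already forces $p_W \to 0$, which differentiability then upgrades to $o(x)$, so positive-definiteness of $\rho$ is not needed here). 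Hence the $\mathcal{N}$ piece is $o(x)$. For the $\mathcal{O}$ sum, the hypothesis states that each outcome is itself $o(x)$, while $\sum_{W\in\mathcal{O}_{\rho,h}} p_W \leq 1$, so this piece is also $o(x)$. Putting the two together yields $\av{W}(x) = o(x)$.

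Comparing the two expressions $\av{W}(x) = x\,\tr(\rho\,\omega) + O(x^2)$ and $\av{W}(x) = o(x)$, dividing by $x$, and letting $x \to 0$, forces $\tr(\rho\,\omega) = 0$, contradicting the hypothesis and completing the proof. The one step I would revisit most carefully is the $\mathcal{O}$ sum when the scheme has infinitely many outcomes: pointwise $W = o(x)$ for each individual $W$ together with $\sum p_W \leq 1$ does not by itself guarantee that the sum is $o(x)$ without some uniformity in how the outcomes decay in $x$. For schemes with finitely many outcomes---the setting of all the explicit schemes discussed in the paper, and indeed of any finite-dimensional physical scheme---the bound is immediate; more generally, a very mild uniformity, implicit in any sensible measurement model, suffices.
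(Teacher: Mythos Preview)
Your proof is correct and follows essentially the same route as the paper's: derive $\av{W} \propto x$ from {\reqAa} via Eq.~\eqref{eq:avWexpand}, derive $\av{W} = o(x)$ from {\reqAb} via Eqs.~\eqref{probi}--\eqref{probabli} together with the hypothesis on the outcomes, and read off the contradiction. Your remark that positive-definiteness of $\rho$ is not needed here is accurate---the paper only invokes it later for the operator-level statement~\eqref{povmabli}, which plays no role in this lemma---and your caveat about uniformity in the infinite-outcome case is a fair point that the paper leaves implicit.
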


Note that, since $\tr(\rho [h, H]) = \tr(h [H, \rho])$, the state must have nonzero coherences in the energy eigenbasis; namely, $[\rho, H] \neq \nul$. Obviously, for any such $\rho$, there always exist an infinite amount of $h$'s for which the condition $\tr(\rho [h, H]) \neq 0$ in Lemma~\ref{thm:AaAbClash} is satisfied. Also note that the case of $W$ simply being equal to zero is included in the $o(x)$ notation.

It is worth emphasizing that, for Lemma~\ref{thm:AaAbClash} to hold, the scheme does not have to be continuous in $\rho$.

\begin{proof}[Proof of Lemma~\ref{thm:AaAbClash}]
As we see from Eq.~\eqref{eq:avWexpand}, once we enforce {\reqAa}, the condition $\tr(\rho [h, H]) \neq 0$ of the lemma necessitates that
\begin{align} \label{propencil}
\av{W} \propto x.
\end{align}
On the other hand, from Eqs.~\eqref{probi} and~\eqref{probabli} we see that, enforcing {\reqAb} necessitates that, if $\lim_{x \to 0} W \neq 0$, then $p_W = o(x)$, which means that $W p_W = o(x)$. By the condition of the lemma, if $\lim_{x \to 0} W = 0$, then $W = o(x)$, which means that $W p_W = o(x)$ also in this case. Hence, $\av{W} = \sum_W W p_W = o(x)$, which is in clear contradiction with Eq.~\eqref{propencil}. This thus proves that, under the assumptions of the lemma, {\reqAa} and {\reqAb} are incompatible.
\end{proof}

Note that the HOW scheme provides an important example of a scheme that is \textit{not} covered by Lemma~\ref{thm:AaAbClash}. Indeed, from Eq.~\eqref{HOWsdorff} we observe that all the outcomes of the HOW scheme go to zero $\propto x$, which is in straightforward violation of the conditions of Lemma~\ref{thm:AaAbClash}.

We immediately see that Lemma~\ref{thm:AaAbClashIndep} in the main text follows from Lemma~\ref{thm:AaAbClash} as a simple corollary. Indeed, as noted above, for any $\rho$ such that $[\rho, H] \neq \nul$, there always exists a $h$ such that $\tr(\rho [h, H]) \neq 0$. And if the outcomes do not depend on $U$, it means that they are either zero, which falls under the category of $o(x)$, or they are not zero. Hence, all the conditions of Lemma~\ref{thm:AaAbClash} are met.

\vspace{2 mm}
\noindent
\textbf{Proof of Theorem~\ref{thm:no-go}.} Our proof strategy here will be to impose {\reqAa} and {\reqAb} and show that they lead to results that are not compatible with {\reqB}.

By the assumption of the theorem, all $M_W$'s are differentiable at least twice with respect to $x$. Therefore, the argumentation between Eqs.~\eqref{peano} and~\eqref{povmabli} holds, which means that, in view of Eq.~\eqref{povmi}, $M_W = o(x)$ whenever $W \in \mathcal{N}_{\rho, h}$. Due to the existence of the second derivative, we can write the $o(x)$ term as $x^2 u_W + o(x^2)$, with some operator $u_W \geq \nul$.

Let us now add to the set $\mathcal{N}_{\rho, h}$ all those $W \in \mathcal{O}_{\rho, h}$ for which $M_W \to \nul$ as $x \to 0$ (if such $W$'s exist, of course). By the very same logic as above, for such $W$'s, we will also have $M_W = x^2 u_W + o(x^2)$. The addition of such elements will extend $\mathcal{N}_{\rho, h}$ into
\begin{align}
\mathbf{N}_{\rho, h} = \big\{ W \; : \,\; \lim_{x \to 0} M_W = +\nul \big\} \; \supseteq \mathcal{N}_{\rho, h},
\end{align}
and, for this set, we will have
\begin{align} \label{eq:POVMinN}
W \in \mathbf{N}_{\rho, h} \Rightarrow \bigg\{ \! \begin{array}{l} W = U_W + O(x),
\vspace{1mm}
\\
M_W = x^2 u_W + o(x^2). \end{array}
\end{align}
Both $u_W \geq \nul$ and $U_W$ may depend on $\rho$, $H$, and $h$.

For the complementary set
\begin{align}
\mathbf{O}_{\rho, h} = \big\{ W \; : \,\; \lim_{x \to 0} M_W > \nul \big\} \; \subseteq \mathcal{O}_{\rho, h},
\end{align}
the double-differentiability of $M_W$'s and $W$'s yields
\begin{align} \label{eq:POVMinO}
W \in \mathbf{O}_{\rho, h} \Rightarrow \bigg\{ \! \begin{array}{l} W = x V_W + x^2 v_W + o(x^2),
\vspace{1mm}
\\
M_W = m_W + x \mu_W + x^2 \nu_W + o(x^2), \end{array}
\end{align}
where, by definition, $m_W > \nul$ for all $W \in \mathbf{O}_{\rho, h}$, provided $\mathbf{O}_{\rho, h}$ is not empty. All quantities involved ($m_W$, $\mu_W$, $\nu_W$, $V_W$, $v_W$) may depend on $\rho$, $H$, and $h$. Recall that lack of zeroth-order terms in the expansion of $W$ in Eq.~\eqref{eq:POVMinO} is a consequence of {\reqAb}; see the discussion around Eqs.~\eqref{setO}--\eqref{povmabli} in this regard.

It is worth emphasizing at this point that
\begin{align}
\mathbf{N}_{\rho, h} \cap \mathbf{O}_{\rho, h} = \emptyset \quad \mathrm{and} \quad \mathbf{N}_{\rho, h} \cup \mathbf{O}_{\rho, h} = \{W\}.
\end{align}

Due to the fact that $\sum_W M_W = \id$ must hold for any $x$, we find from Eqs.~\eqref{eq:POVMinN} and~\eqref{eq:POVMinO} that
\begin{align} \label{norm01ord}
\sum_{W \in \mathbf{O}_{\rho, h}} m_W = \id, \qquad \sum_{W \in \mathbf{O}_{\rho, h}} \mu_W = \nul,
\\ \label{norm2ord}
\sum_{W \in \mathbf{O}_{\rho, h}} \nu_W + \sum_{W \in \mathbf{N}_{\rho, h}} u_W = \nul.
\end{align}

Next, invoking {\reqAa}, we see that, on the one hand, $\av{W} = \sum_W W \tr(\rho M_W)$, while on the other hand, Eq.~\eqref{eq:avWexpand} must be true for all $x$. By comparing the first-order (in $x$) terms, we find that
\begin{align} \label{av1ord}
\av{Y(\rho, h, H)}_\rho = \av{\omega}_\rho,
\end{align}
where
\begin{align} \label{yillopisto}
Y(\rho, h, H) := \! \sum_{W \in \mathbf{O}_{\rho, h}} \! V_W m_W,
\end{align}
$\omega$ is defined in Eq.~\eqref{what}, and $\av{X}_\rho := \tr(\rho X)$. By comparing the second-order terms in the Taylor expansion, we further obtain
\beaa \label{av2ord}
\sum_{W \in \mathbf{O}_{\rho, h}} \!\! \big[V_W & \av{\mu_W}_\rho + v_W \av{m_W}_\rho \big]
\\ 
+& \sum_{W \in \mathbf{N}_{\rho, h}} \!\! U_W \av{u_W}_\rho = \frac{1}{2} \av{i [h, \omega]}_\rho \, .
\eeaa

In order for our proof of incompatibility to ``work,'' we will need to slightly constrain the set of POVMs. These are high-level assumptions about the structure of allowed POVMs, and are going to be based solely on the fact that all $M_W$'s and $W$'s double differentiable with respect to $x$ at $x=0$.

\begin{constraint}
\label{cons:S1}
The only source of complexity of $m_W$'s in the eigenbasis of $H$ can be the noncommutativity of $h$ with $H$ and $\rho$ with $H$. More precisely, it is the Hermitian operators $\zeta := i [\rho, H]$ and $\omega = i [h, H]$ that control the complexity of $m_W$'s: the matrices $m_W$ are real whenever $\zeta$ and $\omega$ are real.
\end{constraint}

This is a natural assumption roughly saying that, in the classical regime of all the operators commuting with each other, the measurement operators of the scheme should be real in the common eigenbasis. In fact, what we require is somewhat weaker: the condition is needed only on the level of the first-order expansion in $x$. Note that a real $\omega$ means a purely imaginary $h$, which in turn means a real $U$.

Of course, the scheme may be constructed in such a way that the POVM elements depend on some quantity that does not depend on $\rho$ and $h$ and is represented by a complex matrix in the eigenbasis of $H$; it is this kind of situations that Constraint~\ref{cons:S1} aims to exclude.

The second constraint concerns the two positive-semidefinite operators defined as
\begin{align} \nonumber
Y^+ = \! \sum_{\substack{W \in \mathbf{O}_{\tau, h} \\ V_W > 0}} \!\! V_W m_W \quad \mathrm{and} \quad Y^- = \! \sum_{\substack{W \in \mathbf{O}_{\tau, h} \\ V_W \leq 0}} \!\! |V_W| m_W.
\end{align}
By definition~\eqref{yillopisto}, these operators satisfy
\begin{align} \label{buys}
Y^+ - Y^- = Y.
\end{align}

\begin{constraint}
\label{cons:S2}
For all the Gibbs states $\tau_{\beta} \in \mathcal{B}_{\epsilon_0}$, $\tr(\tau_{\beta} \{ Y^+, Y^- \}) \geq 0$ whenever $U$ is a real matrix in the eigenbasis of $H$.
\end{constraint}

Despite the fact that $Y^+\geq \nul$ and $Y^- \geq \nul$ by construction, their anticommutator will not be positive-semidefinite whenever $[Y^+, Y^-] \neq \nul$. Therefore, Constraint~\ref{cons:S2} may not be guaranteed in general. However, it is easy to see that
\begin{align} \label{huys}
\tr(\{ Y^+, Y^- \}) = 2 \tr( Y^+ \, Y^-) \geq 0,
\end{align}
and therefore it is natural to expect that Constraint~\ref{cons:S2} should be generically satisfied whenever the state is close enough to $\id/d$. There are two specific situations---that are fairly general in and of themselves---where Constraint~\ref{cons:S2} is satisfied by construction.

\textit{Situation 1:} $[Y^+, Y^-] = \nul$ when the state is thermal and $U$ is real. In this case, $\{Y^+, Y^-\} \geq \nul$, and thus Constraint~\ref{cons:S2} is satisfied. $Y^+$ and $Y^-$ will commute, for example, when all $m_W$'s commute with each other. That trivially happens, e.g., when $m_W$'s are the eigenprojectors of some observable. Of course, the scenarios where $m_W$'s commute with each other are much more diverse than that.

An alternative scenario where $[Y^+, Y^-] = \nul$ is when all $|V_W|$'s are equal to each other; call that value $v>0$. Indeed, in such a case, $Y^+ = v m$ and $Y^- = v n$, where
\begin{align} \nonumber
m = \! \sum_{\substack{W \in \mathbf{O}_{\tau, h} \\ V_W > 0}} \!\! m_W \quad \mathrm{and} \quad n = \! \sum_{\substack{W \in \mathbf{O}_{\tau, h} \\ V_W \leq 0}} \!\! m_W.
\end{align}
In view of Eq.~\eqref{norm01ord}, $m + n = \id$, and therefore
\begin{align} \nonumber
[Y^+, Y^-] &= v^2 [m, n] = v^2 [m, \id - m] = \nul.
\end{align}
Note that, in this scenario, individual $m_W$'s do not have to commute with each other.

\textit{Situation 2:} $\lim\limits_{\beta \to 0}\tr(\tau_\beta \{Y^+, Y^-\}) \neq 0$; let us call this value $\xi$. We immediately see that $\xi > 0$. Indeed, $\lim_{\beta \to 0} \tau_\beta = \id/d$, and therefore, $\xi > 0$ follows from the observation~\eqref{huys}. Now, by the assumption of the theorem, both $Y^+$ and $Y^-$ are continuous functions of $\rho$ in $\mathcal{B}_{\epsilon_0}$, and hence, $\tr(\tau_\beta \{Y^+, Y^-\})$ is continuous in $\beta$ in a certain interval around $\beta = 0$. Therefore, by the very definition of the notion of continuity, there exist a $\beta_0 > 0$ such that $\tau_{\beta_0} \in \mathcal{B}_{\epsilon_0}$ and $\forall \, \beta \in [0, \beta_0]$, $\big\vert \tr(\tau_\beta \{Y^+, Y^-\}) - \xi \big\vert \leq \xi/2$. Hence, $\tr(\tau_\beta \{Y^+, Y^-\}) \geq \xi/2 > 0$. Thus, at least for this specific $\beta_0 > 0$, Constraint~\ref{cons:S2} is satisfied.

We emphasize once again that Constraints~\ref{cons:S1} and~\ref{cons:S2} are not tied to the specific forms~\eqref{eq:POVMinN} and~\eqref{eq:POVMinO} that are imposed by {\reqAb}---the possibility of Taylor-expanding $M_W$'s and $W$'s is all that is needed for stating Constraints~\ref{cons:S1} and~\ref{cons:S2}.

Now, let us fix the basis to be the eigenbasis of $H$ and explore some consequences of Eq.~\eqref{av1ord}, the differentiability of $M_W$'s and $W$'s with respect to $\rho$ in $\mathcal{B}_{\epsilon_0}$, and Constraint~\ref{cons:S1}.

Consider the family of $\omega$'s represented by real (and necessarily symmetric due to $\omega$'s Hermiticity) matrices in the eigenbasis of $H$:
\begin{align} \label{omfam}
\omega \; \in \; \mathbb{R}^{d \times d}, \quad \omega^T = \omega, \quad \mathcal{D}_H[\omega] = \nul,
\end{align}
where the operator $\mathcal{D}_H[\bullet]$ acts on its argument as $\mathcal{D}_H[\omega] := \sum_a P_a \omega P_a$, where $P_a$'s are the eigenprojectors of $H$. The last equality in~\eqref{omfam}, saying that $\omega$ is zero in the diagonal subspace of $H$, is an immediate consequence of the fact that $\omega = i [h, H]$. Note that, with this choice of $\omega$'s, $U$ is a real matrix, too.

Next, consider the family of $\rho$'s defined as
\begin{align} \label{rhofam}
\rho = \tau_{\beta_0} + z R,
\end{align}
where $\tau_{\beta_0}$ is some Gibbs state in $\mathcal{B}_{\epsilon_0}$ with $\beta_0 > 0$, $z$ is a real adimensional parameter that is $\ll 1$ to ensure that $\rho \in \mathcal{B}_{\epsilon_0}$; henceforth, we will refer to $\tau_{\beta_0}$ as simply $\tau$ to simplify notation. $R$ is an arbitrary real and symmetric matrix with zeros in the diagonal eigenspace of $H$:
\begin{align} \label{Rfam}
R \; \in \; \mathbb{R}^{d \times d}, \quad R^{\mathrm{T}} = R, \quad \mathcal{D}_H[R] = \nul;
\end{align}
we will furthermore limit the norm of $R$ from above by some constant to ensure that $\rho$ does not fall out of $\mathcal{B}_{\epsilon_0}$.
For this family of $\rho$'s, $\zeta$'s are purely imaginary:
\begin{align}
\zeta = i z Q; \quad Q = [R, H] \; \in \; \mathbb{R}^{d \times d}, \quad Q^{\mathrm{T}} = -Q.
\end{align}
Due to the third condition in Eq.~\eqref{Rfam}, it is easy to see that $R$ is uniquely determined by $\zeta$; specifically,
\begin{align} \label{basturma}
R = i \sum_{a \neq b} P_a \zeta P_b / (E_a - E_b).
\end{align}
Therefore, $R \to \nul$ is equivalent to $\zeta \to \nul$.


Next, by the main assumption of Theorem~\ref{thm:no-go}, all $W$'s and $M_W$'s, and hence all $V_W$'s and $m_W$'s (and thus $Y$), are at least once differentiable with respect to $\rho$. Thus, keeping in mind Eqs.~\eqref{rhofam} and~\eqref{basturma}, we can Taylor-expand $Y(\rho, \omega, H)$ with respect to $\zeta$ around $\zeta = \nul$, which leads us to
\begin{align} \label{yYy}
Y(\rho, \omega, H) = Y(\tau, \omega, H) + i z G(Q, \tau, \omega, H) + o(z),
\end{align}
where $G(Q, \tau, \omega, H)$ is order-$1$ homogeneous in $Q$.

Let us now invoke Constraint~\ref{cons:S1}, which states that the complexness of $Y$ is determined solely by $\omega$ and $\zeta$. Since all the arguments of $Y(\tau, \omega, H)$ and $G(Q, \tau, \omega, H)$ are real matrices (in the eigenbasis of $H$), then Constraint~\ref{cons:S1} enforces that $Y(\tau, \omega, H)$ and $G(Q, \tau, \omega, H)$ are real matrices themselves. Lastly, due to Hermiticity of $Y(\rho, \omega, H)$, $G(Q, \tau, \omega, H)$ must be skew-symmetric. All in all, Constraint~\ref{cons:S1} leads us to
\beaa \label{symmetries}
Y(\tau, \omega, H), \; G(Q, \tau, \omega, H) \; &\in \; \mathbb{R}^{d \times d} \, ;
\\
Y(\tau, \omega, H)^\mathrm{T} &= Y(\tau, \omega, H),
\\
G(Q, \tau, \omega, H)^{\mathrm{T}} &= - G(Q, \tau, \omega, H).
\eeaa
Substituting Eqs.~\eqref{rhofam} and~\eqref{yYy} into Eq.~\eqref{av1ord}, and taking into account that, due to the skew-symmetry of $G(\tau, F, Q, H)$, $\tr\big(\tau \, G(\tau, F, Q, H)\big) = 0$, and that, due to $\diag (\omega) = \nul$, $\tr(\omega \tau) = 0$, we find that
\begin{align} \nonumber
\tr\big(Y(\tau, \omega, H) \tau_{\beta_0}\big) + z \tr\big(\big[Y(\tau, \omega, H) - \omega\big] R \big) = o(z),
\end{align}
which must be satisfied for arbitrary $z \ll 1$. Whence,
\beaa \label{utro}
\tr\big(Y(\tau, \omega, H) \tau\big) &= 0,
\\
\tr\big(\big[Y(\tau, \omega, H) - \omega\big] R \big) &= 0.
\eeaa
Introducing
\begin{align} \nonumber
Y_{\mathrm{d}} = \mathcal{D}_H [Y (\tau, \omega, H)] \quad \mathrm{and} \quad Y_{\mathrm{nd}}= Y (\tau, \omega, H) - Y_{\mathrm{d}},
\end{align}
and keeping in mind that $\mathcal{D}_H[R] = \nul$, we can rewrite Eqs.~\eqref{utro} as
\begin{align} \label{utro1}
\tr\big(Y_{\mathrm{d}} \tau\big) &= 0,
\\ \label{utro2}
\tr\big(\big[Y_{\mathrm{nd}} - \omega\big] R \big) &= 0.
\end{align}
Now, noticing that $Y_{\mathrm{nd}}(\tau, \omega, H) - \omega$ is a real symmetric matrix that is zero in the diagonal subspace of $H$ and independent of $R$, and that Eq.~\eqref{utro2} holds for arbitrary $R$ satisfying the conditions in Eq.~\eqref{Rfam}, we immediately conclude that $Y_{\mathrm{nd}} = \omega$, and thus,
\begin{align} \label{Yaay}
Y(\tau, \omega, H) = Y_{\mathrm{d}} + \omega,
\end{align}
where $Y_{\mathrm{d}}$ is an arbitrary real, symmetric matrix satisfying $Y_{\mathrm{d}} = \mathcal{D}_H[Y_{\mathrm{d}}]$ and Eq.~\eqref{utro1}. We emphasize that this relation was derived exclusively for real $\omega$'s. For more general $\omega$'s, Eq.~\eqref{Yaay} may not apply.

Let us now estimate $\sum_{W \in \mathbf{O}_{\tau, h}} \!\! \beta_0^2 V_W^2 \av{m_W}_\tau$; the need for this will become evident shortly. To do so, let us invoke the operator-valued Cauchy--Schwarz inequality proven in Refs.~\cite{Lance_1995, Jameson_1996}, which will give us
\begin{align} \nonumber
\sum_{W \in \mathbf{O}_{\tau, h}} \!\!\!\! V_W^2 m_W \! &= \! \Big\Vert \! \sum_{W \in \mathbf{O}_{\tau, h}} \!\! (\sqrt{m_W})^2 \Big\Vert \sum_{W \in \mathbf{O}_{\tau, h}} \!\!\! (|V_W| \sqrt{m_W})^2 
\\ \nonumber
&\geq \!\! \bigg[\sum_{W \in \mathbf{O}_{\tau, h}} \! |V_W| m_W \bigg]^2,
\\ \label{Cauchuk}
\end{align}
where in the first line we used the fact that $\sum_{W \in \mathbf{O}_{\rho, h}} m_W = \id$ [see Eq.~\eqref{norm01ord}].

Next, bringing up the quantities $Y^+$ and $Y^-$, and keeping in mind Eq.~\eqref{buys}, we can rewrite Eq.~\eqref{Cauchuk} as
\beaa \label{innneq}
\sum_{W \in \mathbf{O}_{\tau, h}} \!\! V_W^2 m_W &\geq ( Y^+ - Y^- )^2 + 2 \{ Y^+, Y^- \}
\\
&= (Y_{\mathrm{d}} + \omega)^2 + 2 \{ Y^+, Y^- \}
\\
&\geq \omega^2 + \{\omega, Y_{\mathrm{d}}\} + 2 \{ Y^+, Y^- \}. ~~
\eeaa
From here, in view of Constraint~\ref{cons:S2}, we immediately obtain
\begin{align} \label{intermineq}
\sum_{W \in \mathbf{O}_{\tau, h}} \!\! V_W^2 \av{m_W}_\tau \geq \av{\omega^2}_\tau + \av{\{\omega, Y_{\mathrm{d}} \}}_\tau. ~
\end{align}
Now,
\begin{align} \nonumber
\av{\{\omega, Y_{\mathrm{d}} \}}_\tau = \tr(\tau \{\omega, Y_{\mathrm{d}} \}) = \tr(\omega \{\tau, Y_{\mathrm{d}} \}) = 0,
\end{align}
because $\{\tau, Y_{\mathrm{d}} \}$ can be nonzero only in the diagonal subspace of $H$ whereas $\omega$ is zero in exactly that subspace. Thus, for the chosen families of $\omega$'s and $\rho$'s,
\begin{align} \label{ineq1}
\sum_{W \in \mathbf{O}_{\tau, h}} \!\! \beta_0^2 V_W^2 \av{m_W}_\tau \geq \beta_0^2 \av{\omega^2}_\tau .
\end{align}

We are now ready for the final stage for the proof. Let us fix the $\beta_0 > 0$ from Eq.~\eqref{omfam}, and invoke {\reqB}: $\sum_W e^{-\beta_0 W} \tr(\tau M_W) = 1$. As before, noting that this must hold for any $x$, and collecting all powers of $x$, we will see that, on the zeroth- and first-order levels, {\reqB} is secured by Eqs.~\eqref{norm01ord} and~\eqref{av1ord}. Whereas the second-order level produces
\begin{align} \nonumber
\sum_{W \in \mathbf{O}_{\tau, h}} \! \bigg[ \! \bigg( \frac{\beta_0^2 V_W^2}{2} &- \beta_0 v_W \bigg) \av{m_W}_{\tau} - \beta_0 V_W \av{\mu_W}_{\tau}
\\ \nonumber
+& \, \av{\nu_W}_{\tau} \bigg] + \! \sum_{W \in \mathbf{N}_{\tau, h}} e^{-\beta_0 U_W} \av{u_W}_{\tau} = 0,
\end{align}
which, using Eqs.~\eqref{norm2ord} and~\eqref{av2ord}, we can rewrite as
\begin{align} \nonumber
\sum_{W \in \mathbf{O}_{\tau, h}} \!\! \beta_0^2 & V_W^2 \av{m_W}_\tau = \beta_0 \av{i [h, \omega]}_\tau
\\ \nonumber
& - 2 \sum_{W \in \mathbf{N}_{\tau, h}} \!\! \big(e^{-\beta_0 U_W} + \beta_0 U_W - 1 \big) \av{u_W}_\tau,
\end{align}
from where, by noting that $e^{-z} + z - 1 \geq 0$ for $\forall z \in \mathbb{R}$ (with equality only for $z = 0$), we arrive at
\begin{align} \label{jarz2ord2}
\sum_{W \in \mathbf{O}_{\tau, h}} \beta_0^2 V_W^2 \av{m_W}_\tau \leq \beta_0 \av{i [h, \omega]}_\tau.
\end{align}

Lastly, in Subsection~\ref{app:ineq}, we prove that, as long as $\omega \neq \nul$,
\begin{align} \label{ineq2}
\beta_0^2 \av{\omega^2}_\tau > \beta_0 \av{i [h, \omega]}_\tau.
\end{align}
Substituting this into Eq.~\eqref{ineq1}, which was obtained by enforcing {\reqAa} and {\reqAb}, yields
\begin{align} \label{ineq3}
\sum_{W \in \mathbf{O}_{\tau, h}} \! \beta_0^2 V_W^2 \av{m_W}_\tau > \beta_0 \av{i [h, \omega]}_\tau.
\end{align}
We see that this contradicts Eq.~\eqref{jarz2ord2}. Thus, we have just shown that there exist a $\rho$ and a $h$ for which {\reqAa}, {\reqAb}, and {\reqB} cannot hold all at the same time, which concludes the proof of Theorem~\ref{thm:no-go}.

\subsection{Proof of inequality~\eqref{ineq2}}
\label{app:ineq}

To prove inequality~\eqref{ineq2}, let us switch to the eigenbasis of $H$ ordered in such a way that its eigenvalues $E_k$ are ordered increasingly. By explicit calculation, we then see that
\begin{align} \label{blumen}
\beta^2 \av{\omega^2}_\tau = \sum\limits_{k <j} (p_k + p_j) \Delta_{k j}^2 |h_{kj}|^2,
\end{align}
where $p_k = e^{-\beta E_k} / Z$ are the eigenvalues of $\tau$, and $\Delta_{k j} = \beta (E_j - E_k)$.

In the same notation, another simple calculation leads to
\begin{align} \label{flumen}
\beta \av{i [h, \omega]}_\tau = 2 \sum\limits_{k < j} (p_k - p_j) \Delta_{kj} |h_{kj}|^2 \geq 0.
\end{align}
Note that this inequality holds for any passive state $\tau$ and simply expresses the fact that, when $\rho = \tau$ in Eq.~\eqref{eq:avWexpand}, the term $\propto x$ vanishes, and the second term has to be positive because $\av{W}$ is the work performed on a passive state and thus has to be nonnegative.

Subtracting~\eqref{flumen} from~\eqref{blumen}, we obtain
\begin{align} \nonumber
\beta^2 & \av{\omega^2}_\tau - \beta \av{i [h, \omega]}_\tau
\\ \label{filone}
&= 2 \sum_{k < j} |h_{kj}|^2 (p_k + p_j) \Delta_{kj} \Big[ \frac{\Delta_{kj}}{2} - \tanh \frac{\Delta_{kj}}{2} \Big]. ~~
\end{align}
Now, $[h, H] \neq \nul$ means that $h$ has nonzero nondiagonal elements between some eigensubspaces of $H$. In other words, there exist some $k_0$ and $j_0$ for which $\Delta_{k_0 j_0} > 0$ and $h_{k_0 j_0} \neq 0$. Moreover, since $x - \tanh x > 0$ whenever $x > 0$, at least one of the summands in the right-hand side of Eq~\eqref{filone} is $> 0$, which thereby proves the bound in Eq.~\eqref{ineq2}.

\section{A scheme simultaneously satisfying \textcolor{red}{$\mathfrak{A}_1$} and \textcolor{red}{$\mathfrak{B}$}}
\label{app:scheme}

Let us construct a scheme that satisfies {\reqAa} and {\reqB} and depends continuously on the system's state. The example illustrates that {\reqAb} is an independent and essential requirement for any meaningful measurement of work.

As a first step, we devise a measurement scheme in the specific class of processes in which the system starts out in a possibly coherent (in the eigenbasis of $H$) state and unitarily evolves into a diagonal state: $(H, \rho) \longrightarrow (H', \rho_D)$, where $\rho_D = V \rho V^\dagger$ and $[\rho_D, H'] = \nul$. Here, we notice that the time-reversed process, $(H', \rho_D) \longrightarrow (H, \rho)$, with $\rho = V^\dagger \rho_D V$, is describable by the TPM scheme. 

Now, if the work statistics of a unitary process is $\{ (W_\alpha, p_\alpha) \}_\alpha$, then, to the time-reversed process, it is reasonable to prescribe the work statistics $\{ (-W_\alpha, p_\alpha) \}_\alpha$.

With this prescription, thinking of the $\mathcal{P} = \{ (H, \rho) \overset{V}{\longrightarrow} (H', \rho_D) \}$ process as the reverse of $\widetilde{\mathcal{P}} = \{ (H, \rho_D) \overset{V^\dagger}{\longrightarrow} (H', \rho) \}$, and describing the latter by the TPM scheme [see Eqs.~\eqref{tpem} and~\eqref{salame}], for the work statistics of $\mathcal{P}$ we obtain
\begin{align} \nonumber
W_{a k} &= - (E_a - E'_k),
\\ \nonumber
p_{a k} &= \tr \left( P'_k \rho_D \right) \tr \left( P_a \frac{V^\dagger P'_k \rho_D P'_k V}{\tr \left( P'_k \rho_D \right)} \right)
\\ \nonumber
&= \tr \left( P_a V^\dagger P'_k \rho_D P'_k V \right).
\end{align}
Put in other words:
\begin{align} \label{prot1}
W_{a k} = E'_k - E_a, \qquad p_{a k} = \tr(M_{a k} \rho),
\end{align}
with
\begin{align} \label{prot2}
M_{a k} = V^\dagger P'_k V P_a V^\dagger P'_k V.
\end{align}
Note that $M_{a k}$ depends on the initial state through $V$. We call this measurement ``reverse'' TPM scheme as opposed to the standard TPM scheme that directly measures the ``forward'' process. Interestingly, the work distribution produced by the reverse TPM scheme coincides with that of the scheme proposed in Ref.~\cite{Allahverdyan_2014}. Extraction of ergotropy \cite{Allahverdyan_2004} is an important example of a process where the system goes from a possibly coherent state to a diagonal one (the passive state \cite{Pusz_1978, Lenard_1978}), and the reverse TPM scheme is thus a natural choice for measuring ergotropy fluctuations.

Having at our disposal this scheme, we can now describe any coherent-to-coherent process by decomposing it into coherent-to-incoherent and incoherent-to-coherent processes. More specifically, consider an arbitrary Hamiltonian process $(H, \rho) \longrightarrow (H', \rho')$, with $\rho' = U \rho U^\dagger$, for which both $[\rho, H] \neq \nul$ and $[\rho', H'] \neq \nul$, and decompose it into
\begin{align} \label{proc}
(H, \rho) \overset{R}{\xrightarrow{\hspace*{6 mm}}} (H, \widetilde{\rho}) \overset{UR^\dagger}{\xrightarrow{\hspace*{6 mm}}} (H', \rho'), ~
\end{align}
where $\widetilde{\rho} = R \rho R^\dagger$ is diagonal in $H$. Moreover, $R$ is chosen such that the diagonal of $\widetilde{\rho}$ is ordered in the same way as the diagonal of $\rho$. More specifically, say, $\hat{\pi}$ is the permutation matrix that reorders $\rho_D^{\downarrow}$ into $\rho_D$: $\rho_D = \hat{\pi} \rho_D^\downarrow \hat{\pi}^{\mathrm{T}}$, where $\rho_D$ is a diagonal matrix whose diagonal coincides with that of $\rho$ in the eigenbasis of $H$. Then, if $r^{\downarrow}$ is the vector composed of the eigenvalues of $\rho$, organized in the decreasing order, then $\widetilde{\rho} = \hat{\pi} r^\downarrow \hat{\pi}^{\mathrm{T}}$. Thus, $R$ is the unitary operator that rotates $\rho$ into $\widetilde{\rho}$. The operator $R$ is chosen in this way to guarantee that, when $\rho$ is diagonal, then $R = \id$.

Now, using Eqs.~\eqref{prot1} and~\eqref{prot2}, for the first part of the process~\eqref{proc}, we obtain
\begin{align} \label{lavash}
W^{\RN{1}}_{a b} = E_b - E_a, \qquad p^{\RN{1}}_{a b} = \tr(M^{\RN{1}}_{a b} \rho),
\end{align}
where the POVM is given by
\begin{align} \label{tavish}
M^{\RN{1}}_{a b} = R^\dagger P_b R P_a R^\dagger P_b R.
\end{align}
Following the standard TPM scheme, for the second part of the process, we find
\begin{align} \label{matnaqash}
W^{\RN{2}}_{c k} = E'_k - E_c, \qquad p^{\RN{2}}_{c k} = \tr(M^{\RN{2}}_{c k} \rho),
\end{align}
where
\begin{align} \label{boqon}
M^{\RN{2}}_{c k} = R^\dagger P_c R U^\dagger P_k' U R^\dagger P_c R.
\end{align}
Finally, for the overall process~\eqref{proc}, let us write
\beaa \label{erebuni}
W_{a b c k} &= W^{\RN{1}}_{a b} + W^{\RN{2}}_{c k},
\\
p_{a b c k} &= p^{\RN{1}}_{a b} p^{\RN{2}}_{c k} = \tr\left( M^{\RN{1}}_{a b} \otimes M^{\RN{2}}_{c k} \cdot \rho \otimes \rho \right),
\eeaa
where $M_{a b c k} = M^{\RN{1}}_{a b} \otimes M^{\RN{2}}_{c k}$ is a POVM that depends on the initial state through $R$. 

From the perspective of a single copy of the system, this scheme is equivalent to the following state-dependent POVM. With the same outcomes $W_{a b c k}$, $p_{a b c k}$ can be written as
\begin{align}
p_{a b c k} = \tr\big[\widetilde{M}(\rho)_{a b c k} \, \rho\big],
\end{align}
where
\begin{align}
\widetilde{M}(\rho)_{a b c k} = p^{\RN{2}}_{c k} M^{\RN{1}}_{a b}.
\end{align}
Obviously, the operators $\widetilde{M}(\rho)_{a b c k}$ constitute a POVM.

By construction, this definition satisfies {\reqAa} and, keeping in mind that $R = \id$ for initially diagonal states, it coincides with the TPM scheme, and therefore satisfies {\reqB}. However, for $[\rho, H] \neq \nul$, this definition does not satisfy {\reqAb}. Indeed, in order to measure work for some $[\rho, H] \neq \nul$ in the trivial case of $H' = H$ and $U = \id$, the scheme will first apply a unitary to diagonalize $\rho$ with $H$, thereby producing some nontrivial statistics according to Eqs.~\eqref{lavash} and~\eqref{tavish}; then, it will rotate that state back to $\rho$, again producing nontrivial statistics as per Eqs.~\eqref{matnaqash} and~\eqref{boqon}. In total, for the trivial process, the scheme will produce rather complicated work statistics described by Eq.~\eqref{erebuni}.

\section{The other operator of work}
\label{app:rotateW}

The operator $\Upsilon_\rho$ is similar to the operator for work proposed in Ref.~\cite{Yukawa_2000}. There, the work operator was introduced according to $e^{-\beta \hat{W}} = e^{-\beta U^\dagger H' U} e^{\beta H}$. However, the work operator defined in this way will not be Hermitian whenever $[U^\dagger H' U, H] \neq \nul$. Our operator $\Upsilon_\rho$ is Hermitian by construction and is defined for all states.

First of all, let us show that this scheme is covered by Theorem~\ref{thm:no-go}. It is self-evident that $\hat{\beta}(\rho)$, and therefore $\Upsilon_\rho$, are differentiable functions of $\rho$, which means that both the measurement operators and outcomes of this scheme are differentiable with respect to $\rho$. Moreover, $\Upsilon_\rho$ satisfies Constraint~\ref{cons:S1}, because, for processes~\eqref{processes}, $U^\dagger H U = H + x \omega + O(x^2)$, and therefore, except for $\omega$, all the operators in the expansion of the relevant quantities with respect to $x$ will be operator functions of $H$. Consequently, they will all be real in the eigenbasis of $H$, for all values of $\rho$, which means that the dependence of complexness on $\rho$ is trivial. As regards Constraint~\ref{cons:S2}, it is satisfied by construction: $\Upsilon_\rho$ is an operator scheme, which means that all $M_W$, and thus all $m_W$ are eigenprojectors of a Hermitian operator, and are thus mutually orthogonal, which in turn means that $Y^+ Y^- = \nul$ [see also the discussion under Eq.~\eqref{huys}].

Since $\Upsilon_\rho$ satisfies {\reqAb} and {\reqB} by construction, the fact that Theorem~\ref{thm:no-go} applies to this scheme means that it necessarily has to violate {\reqAa}.

To explicitly show that this is indeed the case, let us consider the processes $\Pi$, defined in Eq.~\eqref{processes}, and thermal initial states $\tau \propto e^{-\beta H}$. First, we rewrite Eq.~\eqref{rotow} as
\begin{align} \label{rollo}
\Upsilon_\tau = - \beta^{-1} \ln \big( \tau^{-1/2} U^\dagger \tau U \tau^{-1/2} \big).
\end{align}
Through a simple calculation employing the Baker--Hausdorff lemma and the Taylor expansion of $\ln(1 + x)$ around $x = 0$, Eq.~\eqref{rollo} will lead us to
\beaa \nonumber
-\beta \Upsilon_\tau =& \, i x \tau^{-1/2} [h, \tau] \tau^{-1/2}
\\
&- \frac{x^2}{2} \tau^{-1/2} [h, [h, \tau]] \tau^{-1/2}
\\
&+ \frac{x^2}{2} \tau^{-1/2} [h, \tau] \tau^{-1}[h, \tau] \tau^{-1/2} + O(x^3).
\eeaa
Therefore,
\begin{align} \label{gazon1}
\beta \av{\Upsilon_\tau}_\tau = - \frac{x^2}{2} \tr(\tau^{-1} [h, \tau]^2) + O(x^3).
\end{align}
Whereas, reading from Eq.~\eqref{eq:avWexpand}, we have
\begin{align} \label{gazon2}
\beta \av{\Omega}_\tau = \frac{x^2}{2} \tr([h, \tau] \, [h, \beta H]) + O(x^3).
\end{align}
Neglecting the $O(x^3)$ terms and working in the eigenbasis of $H$, we can rewrite Eqs.~\eqref{gazon1} and~\eqref{gazon2} as
\begin{align} \label{pixa1}
\beta \av{\Upsilon_\tau}_\tau &= x^2 \sum_{k < j} |h_{k j}|^2 (p_k - p_j)^2 \frac{p_k + p_j}{2 p_k p_j} 
\\ \label{pixa2}
&= x^2 \sum_{k < j} |h_{k j}|^2 p_j (e^{\Delta_{k j}} - 1) \sinh \Delta_{k j}
\end{align}
and
\begin{align} \label{pixa3}
\beta \av{\Omega}_\tau &= x^2 \sum_{k < j} |h_{k j}|^2 (p_k - p_j) \Delta_{k j}
\\ \label{pixa4}
&= x^2 \sum_{k < j} |h_{k j}|^2 p_j (e^{\Delta_{k j}} - 1) \Delta_{k j},
\end{align}
where $p_k = e^{-\beta E_k} / Z$ are the eigenvalues of $\tau$, and $\Delta_{k j} = \beta (E_j - E_k)$; when transitioning from~\eqref{pixa1} to~\eqref{pixa2} and from~\eqref{pixa3} to~\eqref{pixa4}, we used that $p_k = p_j e^{\Delta_{k j}}$.

Now, whenever $[h, H] \neq \nul$, there exist such $k_0$ and $j_0$ for which $\Delta_{k_0 j_0} > 0$ and $h_{k_0 j_0} \neq 0$. Since $\Delta_{k j} \geq 0$ for all $k < j$, Eqs.~\eqref{pixa2} and~\eqref{pixa4} thus imply that
\begin{align}
\av{\Upsilon_\tau}_\tau > \av{\Omega}_\tau,
\end{align}
provided $x$ is small enough for the $O(x^3)$ terms to be irrelevant. This inequality simply means that $\Upsilon_\tau$ violates {\reqAa} for all thermal states $\tau$ as long as $[h, H] \neq \nul$.

\subsection{Two curious properties of $\Upsilon_\rho$}

Below, we will prove two results about $\Upsilon_\rho$ that are not related to the main goal of this appendix. However, we feel compelled to present them to more completely characterize the yet another ``definition of work'' that $\Upsilon_\rho$ represents.
\begin{proposition} \label{thm:Yuk1}
\begin{align} \label{eq:WMineq}
\av{\Omega}_{\tau} - \av{\Upsilon_\tau}_{\tau} \leq \beta^{-1} S(\tau \, \Vert \, U^\dagger \tau' U);
\end{align}
\end{proposition}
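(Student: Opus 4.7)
The plan is to observe that the right-hand side of~\eqref{eq:WMineq} rewrites cleanly as $\av{\Omega}_\tau - \Delta F$, where $\Delta F := F_\beta[H'] - F_\beta[H]$, so that the claimed inequality is equivalent to the assertion $\av{\Upsilon_\tau}_\tau \geq \Delta F$. The latter is just a second-law bound, and it will follow from the fact that $\Upsilon_\tau$ satisfies the Jarzynski equality on thermal inputs by construction, combined with Jensen's inequality.

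First I would compute the relative entropy explicitly. Writing $\tau' = e^{-\beta H'}/Z_\beta[H']$ so that $U^\dagger \tau' U = e^{-\beta U^\dagger H' U}/Z_\beta[H']$, one has
\begin{align*}
\ln(U^\dagger \tau' U) = -\beta\, U^\dagger H' U - (\ln Z_\beta[H'])\,\id,
\end{align*}
while $\ln \tau = -\beta H - (\ln Z_\beta[H])\,\id$. Substituting into $S(\tau \Vert U^\dagger \tau' U) = \tr(\tau \ln \tau) - \tr(\tau \ln(U^\dagger \tau' U))$ and using $\Delta F = -\beta^{-1}(\ln Z_\beta[H'] - \ln Z_\beta[H])$ gives
\begin{align*}
\beta^{-1} S(\tau \Vert U^\dagger \tau' U) = \tr\!\big(\tau (U^\dagger H' U - H)\big) - \Delta F = \av{\Omega}_\tau - \Delta F,
\end{align*}
which reduces~\eqref{eq:WMineq} to the equivalent statement $\av{\Upsilon_\tau}_\tau \geq \Delta F$.

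Next I would verify that $\Upsilon_\tau$ obeys the Jarzynski equality on thermal inputs. Since $\hat{\beta}(\tau) = \beta$ by definition, Eq.~\eqref{rollo} gives $e^{-\beta \Upsilon_\tau} = e^{\beta H/2} e^{-\beta U^\dagger H' U} e^{\beta H/2}$. Taking the expectation in $\tau = e^{-\beta H}/Z_\beta[H]$, using cyclicity of the trace together with the unitary invariance $\tr(e^{-\beta U^\dagger H' U}) = \tr(e^{-\beta H'})$, yields $\av{e^{-\beta \Upsilon_\tau}}_\tau = Z_\beta[H']/Z_\beta[H] = e^{-\beta \Delta F}$. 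Jensen's inequality applied to the convex map $x \mapsto e^{-\beta x}$ then gives $e^{-\beta \av{\Upsilon_\tau}_\tau} \leq \av{e^{-\beta \Upsilon_\tau}}_\tau = e^{-\beta \Delta F}$, and hence $\av{\Upsilon_\tau}_\tau \geq \Delta F$, as needed.

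There is no real obstacle: the whole argument is a two-line rewriting of the relative entropy followed by the standard Jarzynski--Jensen second-law step. The only point requiring attention is recognizing that the right-hand side of~\eqref{eq:WMineq} packages precisely $\av{\Omega}_\tau - \Delta F$, which collapses the proposition to a clean irreversibility bound for the $\Upsilon_\tau$ scheme.
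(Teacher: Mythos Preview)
Your proof is correct and follows essentially the same route as the paper: both reduce the claim to $\av{\Upsilon_\tau}_\tau \geq \Delta F$ via the identity $\beta^{-1} S(\tau \Vert U^\dagger \tau' U) = \av{\Omega}_\tau - \Delta F$, and both derive that bound from the Jarzynski equality for $\Upsilon_\tau$ together with Jensen's inequality (the paper phrases it as concavity of $\ln$, you as convexity of $e^{-\beta x}$, which are equivalent).
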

\begin{proposition} \label{thm:Yuk2}
\begin{align} \label{eq:WMmaj}
\mathrm{Spec}(\Upsilon_\tau)^{\downarrow} \succ \mathrm{Spec}(\Omega)^\downarrow,
\end{align}
for any $\beta$.
\end{proposition}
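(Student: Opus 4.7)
My plan is to derive Proposition~\ref{thm:Yuk2} from the log-majorization strengthening of the Golden--Thompson inequality, which asserts that for any two Hermitian operators $A$ and $B$,
\begin{align}
\prod_{i=1}^k \lambda_i^\downarrow(e^{A+B}) \leq \prod_{i=1}^k \lambda_i^\downarrow(e^{A/2} e^B e^{A/2})
\end{align}
for every $k = 1, \ldots, d$, with equality at $k = d$ by the determinant identity. This classical result (due to Lenard and Thompson) is equivalent, via componentwise logarithm (monotone on positive Hermitian operators), to $\lambda(A + B) \prec \lambda\big(\ln(e^{A/2} e^B e^{A/2})\big)$ in the ordinary sense.

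Applying this with $A = \beta H$ and $B = -\beta \, U^\dagger H' U$ immediately gives
\begin{align}
A + B = -\beta \, \Omega, \qquad e^{A/2} e^B e^{A/2} = e^{-\beta \Upsilon_\tau},
\end{align}
the second identity being Eq.~\eqref{rotow} specialized to $\rho = \tau$ (where $\hat{\beta}(\tau) = \beta$). The log-majorization therefore becomes $\lambda(-\beta \Omega) \prec \lambda(-\beta \Upsilon_\tau)$. Since majorization is preserved under multiplication by any real scalar—a consequence of its convex-hull characterization, whereby $x \prec y$ iff $x$ lies in the convex hull of permutations of $y$—dividing by $-\beta$ yields $\lambda(\Omega) \prec \lambda(\Upsilon_\tau)$, which is precisely $\mathrm{Spec}(\Upsilon_\tau)^\downarrow \succ \mathrm{Spec}(\Omega)^\downarrow$.

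The trace-equality condition needed to upgrade weak to full majorization is automatic and does not need to be checked separately: applying $\ln \det = \tr \ln$ to the definition of $\Upsilon_\tau$ gives $-\beta \tr(\Upsilon_\tau) = \beta \tr H - \beta \tr H' = -\beta \tr(\Omega)$. The edge case $\beta = 0$ is immediate since $\Upsilon_\tau \to \Omega$ in that limit. The only subtlety in the argument is to track the direction of the majorization carefully through the negation in the exponent; beyond this bookkeeping, the proof is essentially a one-line application of the Lenard--Thompson inequality, and I do not anticipate a real obstacle.
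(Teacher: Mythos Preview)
Your proof is correct and follows essentially the same route as the paper: both invoke the Lenard--Thompson log-majorization inequality with $A = \beta H$ and $B = -\beta U^\dagger H' U$, identify $e^{A+B} = e^{-\beta\Omega}$ and $e^{A/2}e^B e^{A/2} = e^{-\beta\Upsilon_\tau}$, take logarithms, and flip the sign. The only cosmetic difference is that the paper states Lenard's density theorem and Thompson's eigenvalue-product characterization separately and passes to the limit, whereas you cite the consolidated log-majorization statement directly.
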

Here, $\tau' = e^{-\beta H'}/ \tr e^{-\beta H'}$ and $\mathrm{Spec}(O)$ is the spectrum of the operator $O$. The superscript $\downarrow$ indicates that the set is ordered from large to small, and $\succ$ is the majorization relation \cite{hojo}.

\begin{proof}[Proof of Proposition~\ref{thm:Yuk1}]
Let us first observe that, as a trivial consequence of the operator-concavity of the logarithm,
\begin{align}
\tr(\rho \ln O) \leq \ln\tr(\rho O),
\end{align}
for any density operator $\rho$ and Hermitian operator $O \geq \nul$. Furthermore, noticing that $e^{\beta H/2}e^{-\beta U^\dagger H' U} e^{\beta H/2} > \nul$, we obtain
\begin{align} \label{zartuxi}
\av{\Upsilon_\tau}_\tau = - \beta^{-1} \tr \Big( \tau \ln e^{- \beta \small{\Upsilon_\tau}} \Big) \geq \Delta F,
\end{align}
where we took into account that, by construction,
\begin{align}
\big\langle e^{-\beta \small{\Upsilon_\tau}} \big\rangle_\tau = e^{-\beta \Delta F} = \frac{Z_\beta[H']}{Z_\beta[H]}.
\end{align}
Interestingly, inequality~\eqref{zartuxi} provides another aspect by which $\Upsilon_\tau$ resembles work.

Finally, observing that
\begin{align}
\Delta F = \av{\Omega}_\tau - \beta^{-1} S(\tau \, \Vert \, U^\dagger \tau' U),
\end{align}
we justify Eq.~\eqref{eq:WMineq}.
\end{proof}

\begin{proof}[Proof of Proposition~\ref{thm:Yuk2}]
For this proof, we invoke the following two theorems, proven in Refs.~\cite{Lenard_1971, Thompson_1971}, that hold for arbitrary Hermitian operators $A$ and $B$.
\begin{theorem}[The Theorem of Ref.~\cite{Lenard_1971}] \label{thm:lenard}
An arbitrary neighborhood of $e^{A + B}$ contains an $X$ for which there exist two unitary operators $\mathcal{U}$ and $\mathcal{V}$ such that
\begin{align}
\mathrm{Spec}(X) = \mathrm{Spec}\big( Y^{\frac{1}{2}} \mathcal{U} Y^{\frac{1}{2}} \mathcal{V} \big),
\end{align}
where
\begin{align}
Y = e^{A/2} e^B e^{A/2}.
\end{align}
\end{theorem}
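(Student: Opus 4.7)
I would combine the Lie--Trotter product formula with the polar decomposition of $e^{A/2}\,e^{B/2}$. The algebraic core is the identity $MM^\dagger = e^{A/2}\,e^{B}\,e^{A/2} = Y$ for $M := e^{A/2}\,e^{B/2}$, which gives the polar decomposition $M = Y^{1/2}\,\mathcal{U}_0$ for some unitary $\mathcal{U}_0$. Squaring yields $M^2 = Y^{1/2}\,\mathcal{U}_0\,Y^{1/2}\,\mathcal{U}_0$, which is already of the prescribed form $Y^{1/2}\,\mathcal{U}\,Y^{1/2}\,\mathcal{V}$ with $\mathcal{U} = \mathcal{V} = \mathcal{U}_0$. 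This handles the particular choice $X = M^2$, but $M^2$ is generally not close to $e^{A+B}$, so a single coarse step does not suffice for an arbitrarily small neighborhood.

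To place $X$ inside an arbitrarily small neighborhood of $e^{A+B}$, I would invoke the Lie--Trotter identity $e^{A+B} = \lim_{n\to\infty}(e^{A/n}\,e^{B/n})^n$. For large $n$, the approximant $X_n := (e^{A/n}\,e^{B/n})^n$ lies in the given neighborhood. By cyclic similarity (conjugating by $e^{-A/(2n)}$) one gets $\mathrm{Spec}(X_n) = \mathrm{Spec}(S_n^n)$, where $S_n := e^{A/(2n)}\,e^{B/n}\,e^{A/(2n)}>0$ and $S_1 = Y$. The remaining task is to realize the positive multiset $\mathrm{Spec}(S_n^n)$ as $\mathrm{Spec}\bigl(Y^{1/2}\,\mathcal{U}_n\,Y^{1/2}\,\mathcal{V}_n\bigr)$ for some unitaries $\mathcal{U}_n,\mathcal{V}_n$ that are allowed to depend on $n$.

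The hard part is this spectral realization step. In the $n=2$ case, the positive factor arising from a polar decomposition of the half-step $M$ is exactly $Y^{1/2}$; for $n>2$, the analogous positive factor $\sqrt{(e^{A/n}\,e^{B/n})^{n/2}(e^{B/n}\,e^{A/n})^{n/2}}$ depends on $n$ and tends to $e^{(A+B)/2}$ rather than to the fixed $Y^{1/2}$. Bridging this gap requires exploiting the full two-parameter freedom in $(\mathcal{U},\mathcal{V})$: as $\mathcal{U},\mathcal{V}$ range over $\mathrm{U}(d)$, the attainable spectra $\mathrm{Spec}(Y^{1/2}\,\mathcal{U}\,Y^{1/2}\,\mathcal{V})$ form a set characterized by Horn-type log-majorization conditions relative to $\mathrm{Spec}(Y)$ (with the determinantal constraint $\prod_i |\mu_i| = \det Y$ automatic).

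The decisive verification is therefore that the positive spectrum of $S_n^n$ lies in that Horn-realizable set for every $n$; once it does, a standard Horn realization argument produces unitaries $\mathcal{U}_n,\mathcal{V}_n$ with the prescribed spectrum, and the theorem follows. I expect the realizability check to be the main technical obstacle, since it amounts to a majorization relation between $\mathrm{Spec}(S_n^n)$ and $\mathrm{Spec}(Y)$ that has to be established directly from the algebraic structure of the Trotter product (not from log-majorization for $e^{A+B}$, which Lenard is proving downstream and which cannot be assumed here). A natural route is to run the Horn-type inequalities at the level of each single Trotter factor and propagate them multiplicatively through the $n$-fold product, using the stability of log-majorization under products of positive operators.
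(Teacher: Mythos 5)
First, note that the paper does not prove this statement at all: Theorem~\ref{thm:lenard} is imported verbatim from Ref.~\cite{Lenard_1971} and used as a black box (together with Theorem~\ref{thm:thompson}) in the proof of Proposition~\ref{thm:Yuk2}, so there is no in-paper proof to compare against. Judged on its own terms, your proposal has the right architecture --- Trotter approximants $X_n=(e^{A/n}e^{B/n})^n$, the similarity $\mathrm{Spec}(X_n)=\mathrm{Spec}(S_n^n)$ with $S_n=e^{A/(2n)}e^{B/n}e^{A/(2n)}$, and realization of that spectrum via the Thompson/Horn characterization --- and the polar-decomposition identity $e^{A/2}e^{B/2}=Y^{1/2}\mathcal{U}_0$ is correct. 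But the argument is not a proof: the decisive step you defer, namely $\prod_{i\le k}\lambda_i(S_n)^n\le\prod_{i\le k}\lambda_i(Y)$ for all $k$ (together with $\det S_n^n=\det Y$, which is the easy part), is precisely the Araki--Lieb--Thirring log-majorization $\lambda\big((P^{1/2}QP^{1/2})^n\big)\prec_{\log}\lambda\big(P^{n/2}Q^nP^{n/2}\big)$ with $P=e^{A/n}$, $Q=e^{B/n}$. That is a substantial theorem in its own right, and historically it is \emph{derived from} the Lenard/Thompson results (via $\mathrm{Spec}(Y^{1/2}\mathcal{U}Y^{1/2}\mathcal{V})\prec_{\log}\mathrm{Spec}(Y)$ and the limit $n\to\infty$), so the circularity you flag is not hypothetical.

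The specific repair you suggest --- running the Horn-type inequalities at the level of each single Trotter factor and propagating them multiplicatively --- does not close the gap. Submultiplicativity of singular-value products gives $\prod_{i\le k}\sigma_i(X_n)\le\big(\prod_{i\le k}\sigma_i(e^{A/n}e^{B/n})\big)^{n}$, but the right-hand side is an $n$-dependent quantity, and comparing it to the fixed $\prod_{i\le k}\lambda_i(Y)$ is exactly the same monotonicity statement (that $t\mapsto\prod_{i\le k}\lambda_i(e^{tA/2}e^{tB}e^{tA/2})^{1/t}$ is nondecreasing) in disguise; Horn's inequalities alone do not deliver it. To make the plan work you would need an independent proof of that inequality, e.g.\ reducing the $k$-fold product to the $k=1$ case by antisymmetric tensor powers and then invoking the Cordes--Heinz norm inequality $\Vert P^{t}Q^{t}\Vert\ge\Vert PQ\Vert^{t}$ for $t\ge1$ --- machinery that is absent from the proposal. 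As it stands, you have a correct reduction plus an unproven, and genuinely nontrivial, core inequality.
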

\begin{theorem}[Theorem 1 of Ref.~\cite{Thompson_1971}] \label{thm:thompson}
If $X_1 \geq X_2 \geq \cdots \geq X_d \geq 0$ and $Y_1 \geq Y_2 \geq \cdots \geq Y_d \geq 0$ are the eigenvalues of positive-semidefinite Hermitian matrices $X$ and $Y$, then unitary operators $\mathcal{U}$ and $\mathcal{V}$ exist such that $\mathrm{Spec}(X) = \mathrm{Spec}\big( Y^{\frac{1}{2}} \mathcal{U} Y^{\frac{1}{2}} \mathcal{V} \big)$ if and only if
\begin{align} \label{eq:Thompson_1}
X_1 X_2 \cdots X_k \leq Y_1 Y_2 \cdots Y_k,
\end{align}
for $k = 1, 2, ..., d-1$, and
\begin{align} \label{eq:Thompson_2}
\det X = \det Y.
\end{align}
\end{theorem}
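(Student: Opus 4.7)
The result characterizes when a prescribed vector of nonnegative reals $\{X_i\}$ can appear as the spectrum of $N := Y^{1/2} \mathcal{U} Y^{1/2} \mathcal{V}$ for some pair of unitaries $\mathcal{U},\mathcal{V}$. My plan is to treat the two directions separately: the ``only if'' direction by combining classical log-majorization inequalities, and the ``if'' direction by an inductive construction of the unitaries.

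For the necessity direction, assume $\mathcal{U},\mathcal{V}$ realize $\mathrm{Spec}(N) = \{X_i\}$. I would combine Weyl's inequality with the Horn--Weyl log-majorization for products of positive-semidefinite matrices. Weyl's inequality gives $\prod_{i=1}^k |\lambda_i(N)| \leq \prod_{i=1}^k \sigma_i(N)$, which reduces to $\prod_{i=1}^k X_i \leq \prod_{i=1}^k \sigma_i(N)$ since the $X_i$ are nonnegative reals. Next, computing $N^*N = \mathcal{V}^* Y^{1/2}(\mathcal{U}^* Y \mathcal{U}) Y^{1/2}\mathcal{V}$ exhibits the squared singular values of $N$ as the eigenvalues of a product of two positive-semidefinite operators, each with spectrum $\{Y_i\}$. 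The classical log-majorization $\prod_{i=1}^k \lambda_i(PQ) \leq \prod_{i=1}^k \lambda_i(P)\lambda_i(Q)$ for PSD $P,Q$ then yields $\prod_{i=1}^k \sigma_i(N) \leq \prod_{i=1}^k Y_i$, and chaining delivers~(\ref{eq:Thompson_1}). Equality at $k=d$ follows from $|\det N| = \det Y$ (since $|\det \mathcal{U}|=|\det \mathcal{V}|=1$), which together with $\lambda_i(N) = X_i \geq 0$ gives~(\ref{eq:Thompson_2}).

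For sufficiency, the natural plan is induction on $d$. The $d=1$ case is trivial, since the determinant condition forces $X_1 = Y_1$ and $\mathcal{U}=\mathcal{V}=\id$ suffices. The $d=2$ case admits an explicit parameterization: writing $\mathcal{U},\mathcal{V}$ as general $2\times 2$ unitaries and computing the characteristic polynomial of $N$, one checks that the constraints $X_1 \leq Y_1$ and $X_1 X_2 = Y_1 Y_2$ are enough to make the roots match the prescribed $X_1, X_2$. For the inductive step, I would aim to construct $\mathcal{U},\mathcal{V}$ so that $N$ has a one-dimensional invariant subspace on which it acts as multiplication by, say, $X_d$, and then apply the induction hypothesis on the $(d-1)$-dimensional orthogonal complement. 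Realizing $X_d$ as an eigenvalue should be arrangeable by a rank-two unitary rotation mixing two carefully selected $Y$-eigenspaces, reducing that part of the problem to the $d=2$ base case.

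The main obstacle will be the closure of the hypotheses under the inductive reduction: after peeling off one eigenvalue, the residual vectors $(X_1,\dots,X_{d-1})$ and the residual spectrum of $Y$ must themselves satisfy (\ref{eq:Thompson_1}) and (\ref{eq:Thompson_2}) in dimension $d-1$. Verifying this closure requires an interlacing-type selection lemma choosing the correct pair of $Y$-eigenspaces to rotate, analogous in spirit to (but subtler than) the inductive arguments underlying Horn's theorem for eigenvalues of sums of Hermitian matrices. I would anticipate that a continuity/deformation argument across the compact space of unitary pairs is needed to guarantee that a valid reduction always exists whenever the hypotheses hold; this is presumably where Thompson's original paper does most of the work.
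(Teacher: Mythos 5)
First, a point of context: the paper does not prove this statement at all---it is quoted as Theorem~1 of Ref.~\cite{Thompson_1971} and used as an imported ingredient in the proof of Proposition~\ref{thm:Yuk2}. There is therefore no in-paper argument to compare yours against; your proposal has to stand on its own as a proof of Thompson's theorem.

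On that standard, the necessity half of your argument is essentially sound. Writing $N = Y^{1/2}\mathcal{U}Y^{1/2}\mathcal{V}$, Weyl's majorant inequality gives $\prod_{i\le k}|\lambda_i(N)| \le \prod_{i\le k}\sigma_i(N)$, and since $N^{*}N$ is unitarily equivalent to $Y^{1/2}\big(\mathcal{U}^{*}Y\mathcal{U}\big)Y^{1/2}$, Horn's log-majorization for products of positive-semidefinite operators bounds $\prod_{i\le k}\sigma_i(N)^2$ by $\prod_{i\le k}Y_i^2$; the determinant identity then settles the $k=d$ case, yielding~\eqref{eq:Thompson_1} and~\eqref{eq:Thompson_2}. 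The sufficiency half, however, is a plan rather than a proof, and you say as much yourself: the inductive step hinges on an unproven interlacing-type selection lemma guaranteeing that one can split off the eigenvalue $X_d$ by a rank-two rotation while leaving residual data that again satisfy the hypotheses in dimension $d-1$, and the existence of such a reduction is deferred to an unspecified ``continuity/deformation argument.'' That closure property is precisely the hard content of the converse (it plays the role of A.~Horn's converse to Weyl's eigenvalue--singular-value inequalities), and without it the ``if'' direction is unestablished; even the $d=2$ base case is asserted rather than computed. As written, the proposal establishes only one direction of the claimed equivalence.
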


In our situation, $A = \beta H$ and $B = -\beta U^\dagger H' U$, and thus $e^{A + B} = e^{-\beta \Omega}$ and $Y = e^{-\beta \Upsilon_\tau}$. Therefore, Theorems~\ref{thm:lenard} and~\ref{thm:thompson} imply that, given some matrix norm $\Vert \cdot \Vert$ (e.g., the operator norm), for any $\epsilon > 0$, there exists an $\Omega^{(\epsilon)}$ such that $\Vert \Omega - \Omega^{(\epsilon)} \Vert < \epsilon$ and, according to Eq.~\eqref{eq:Thompson_1},
\begin{align} \label{eq:majmuj}
\Omega^{(\epsilon)}_d + \cdots + \Omega^{(\epsilon)}_{d - k + 1} \geq \Gamma_d + \cdots + \Gamma_{d - k + 1},
\end{align}
for $k = 1, 2, ..., d-1$, where $\Omega^{(\epsilon)}_1 \geq \cdots \geq \Omega^{(\epsilon)}_d$ are the ordered eigenvalues of $\Omega^{(\epsilon)}$ and $\Gamma_1 \geq \cdots \geq \Gamma_d$ are the ordered eigenvalues of $\Upsilon_\tau$. Eq.~\eqref{eq:Thompson_2}, in turn, entails
\begin{align} \label{eq:zarbab}
\sum_{k=1}^d \Omega^{(\epsilon)}_k = \tr \Omega^{(\epsilon)} = \tr \Upsilon_\tau = \sum_{k=1}^d \Gamma_k.
\end{align}
Taking the limit $\epsilon \to 0$ in Eqs.~\eqref{eq:majmuj} and~\eqref{eq:zarbab}, it is straightforward to see that $\sum_{m=1}^k \Omega_m \leq \sum_{m=1}^k \Gamma_m$, for $k = 1,..., d-1$ and $\tr \Omega = \tr \Upsilon_\tau$, which concludes the proof of Proposition~\ref{thm:Yuk2}.
\end{proof}

\bibliographystyle{sirun}
\bibliography{references}

\clearpage

\renewcommand{\appendixname}{}

\onecolumngrid
\begin{center}
\textbf{\large Supplementary Note}
\end{center}

\vskip 1 cm

\twocolumngrid


\setcounter{section}{0}
\setcounter{equation}{0}
\setcounter{figure}{0}
\setcounter{table}{0}
\makeatletter
\renewcommand{\theequation}{S\arabic{equation}}
\renewcommand{\thefigure}{S\arabic{figure}}
\renewcommand{\thesection}{S}

\section{Classical limit of the operator of work}
\label{suppl:HOW_clasl}

Here, we will explore the question of whether the operator of work starts obeying the Jarzynski equality in the classical limit of a quantum continuous-variable system. We will study the simplest such example---the driven harmonic oscillator.

From the Baker--Cambell--Hausdorff formula,
\beaa \label{eq:bch}
\av{e^{-\beta \Omega}}_{\tau_\beta} =& \frac{1}{Z_\beta [H]} \tr \big( e^{ - \beta H} e^{- \beta \Omega} \big)
\\
=& \frac{1}{Z_\beta [H]} \tr e^{- \beta (H + \Omega) + \frac{1}{2} \beta^2 [H, \Omega] + \cdots}
\\
=& \frac{1}{Z_\beta [H]} \tr e^{- \beta U^\dagger H' U + \frac{1}{2} \beta^2 [H, U^\dagger H' U] + \cdots},
\eeaa
one could naively expect that, since the commutator terms go to zero as $\hbar \to 0$, one would have that $\av{e^{-\beta \Omega}}_{\tau_\beta}$ will tend to
\begin{align} \nonumber
\frac{1}{Z_\beta [H]} \tr e^{- \beta U^\dagger H' U} = e^{-\beta (F_\beta [H'] - F_\beta [H])}.
\end{align}
However, due to the presence of infinite-norm operators, the $\hbar \to 0$ limit may not necessarily be continuous in such systems. In order to see this, let us take an oscillator with the Hamiltonian
\begin{align} \label{set1}
H=\frac{1}{2}\big(q^2 + p^2\big),
\end{align}
that is, the mass and the frequency are equal to $1$, so that $\beta \hbar$ is dimensionless.

Now, performing a process as a result of which the frequency of the oscillator changes, $\beta^2 [H, U^\dagger H' U]$ will necessarily contain a term proportional to $\beta^2 [p^2, q^2] = -2 i \beta^2 \hbar \{ p, q \}$, where $\{ \cdot, \cdot \}$ is the anticommutator. Introducing the creation and annihilation operators $a$ and $a^\dagger$, we will have $\beta^2 [p^2, q^2] = 2 (\beta\hbar)^2 \left[(a^\dagger)^2 - a^2 \right]$. On the one hand, if $\{ \ket{n} \}_{n=0}^\infty$ are the eigenvectors of $a^\dagger a$, then
\begin{align} \nonumber
\bra{n+2} \beta^2 [p^2, q^2] \ket{n} = 2 (\beta\hbar)^2 \sqrt{(n+1)(n+2)},
\end{align}
meaning that the norm of $\beta^2 [p^2, q^2]$ is infinity for any, no matter how small, non-zero value of $\hbar \beta$.

On the other hand, if $\beta \hbar = 0$, $\Vert \beta^2 [p^2, q^2] \Vert = 0$, which means that the $\beta \hbar \to 0$ limit is not continuous for $\beta^2 [H, U^\dagger H' U]$, which, in turn, implies that the na\"{i}ve $\beta \hbar \to 0$ limit in Eq.~\eqref{eq:bch} has to be taken with extra care. We do this calculation for a specific illustrative example below and show that indeed $\av{e^{-\beta \Omega}}$ converges to $e^{-\beta (F_\beta [H'] - F_\beta [H])}$ as $\beta \hbar \to 0$.

Our goal will thus be to explicitly calculate 
\begin{align} \label{WO1}
\Omega = U^\dagger H' U - H
\end{align}
and then calculate
\begin{align} \label{jarz1}
\big\langle e^{-\beta \Omega} \big\rangle_{\tau_\beta} = \sum_n \bra{\psi_n} \tau_\beta \ket{\psi_n} \, e^{-\beta W_n},
\end{align}
where $W_n$ are the eigenvalues of $\Omega$ and $\ket{\psi_n}$ are its eigenvectors. Then, we are going to take the $\beta \hbar \to 0$ limit. We know that, for any finite $\beta \hbar$, $\av{e^{-\beta \Omega}}_{\tau_\beta}$ will be larger than $e^{-\beta (F_\beta[H'] - F_\beta [H])}$ \cite{Allahverdyan_2005}.


Given the infinite-dimensional Hilbert space, one should of course distinguish between $\Omega$'s with spectra bounded and unbounded from below. In this section, we consider only $\Omega$'s whose spectra are bounded from below.

Since during the driving the Hamiltonian of the oscillator remains quadratic in $q$ and $p$, the generated unitary evolution operator $U$ is Gaussian. Therefore, the transformation of $q$ and $p$ in the Heisenberg picture will be symplectic \cite{Adesso_2007}, which means that, introducing the column vectors $\mathbf{z} = (q, p)$ and $\mathbf{z}' = (U^\dagger q U, \, U^\dagger p U)$, we can write
\begin{align}
\mathbf{z}' = S \mathbf{z},
\end{align}
where $S$ is a $2 \times 2$ symplectic matrix that depends only on $U$.

Now, if
\begin{align}
H' = \mathbf{z}^\rmT \Lambda \mathbf{z},
\end{align}
where $\Lambda > \nul_2$ is a $2 \times 2$ real, symmetric matrix ($\nul_2$ is the $2 \times 2$ zero matrix), then
\begin{align}
U^\dagger H' U = (\mathbf{z}')^\rmT \Lambda \mathbf{z}' = \mathbf{z}^\rmT S^\rmT \Lambda S \mathbf{z},
\end{align}
and therefore, recalling Eq.~\eqref{set1},
\begin{align}
\Omega = \mathbf{z}^\rmT \Xi \mathbf{z}
\end{align}
with
\begin{align} \label{voskor}
\Xi = S^\rmT \Lambda S - \frac{1}{2} \id_2,
\end{align}
where $\id_2$ is the $2 \times 2$ identity matrix. As mentioned above, we will consider the only the processes for which
\begin{align}
\Xi > \nul_2.
\end{align}
In which case, by the Williamson's theorem \cite{Williamson_1936, Adesso_2007}, there exists a symplectic transformation, $\Sigma$, that takes $\mathbf{z}$ to new coordinates $\tilde{\mathbf{z}} = (\tilde{q}, \tilde{p})$, i.e.,
\begin{align} \label{tran1}
\tilde{\mathbf{z}} = \Sigma \mathbf{z},
\end{align}
in which $\Xi$ is proportional to the identity. Namely, if $w > 0$ is the symplectic eigenvalue of $\Xi$, then
\begin{align} \label{Wdiag1}
\Omega = \tilde{\mathbf{z}}^T (\Sigma^{-1})^T \Xi \Sigma^{-1} \tilde{\mathbf{z}} = w (\tilde{q}^2 + \tilde{p}^2).
\end{align}
Introducing the symplectic form $I = \Big(\! \begin{array}{cc} 0 & 1 \\ -1 & 0 \end{array} \!\Big)$, one can calculate $w$ by noting that the eigenvalues of $\Xi I$ are $\pm i w$. Using that, it is easy to show that
\begin{align} \label{magi}
w = \det (\Xi^{1/2}) = \sqrt{\det \Xi}.
\end{align}
Moreover, the symplectic transformation in Eq.~\eqref{Wdiag1} is given by
\begin{align} \label{agic}
\Sigma = w^{-1/2} \Xi^{1/2}.
\end{align}

In order to calculate the trace in Eq.~\eqref{jarz1}, we will make use of some ready formulas for Gaussian states, presented in Ref.~\cite{Scutaru_1998}. The results Ref.~\cite{Scutaru_1998} are given for states, therefore we will introduce density operator
\begin{align}
\kappa_\beta = \frac{1}{Z_W} e^{- \beta \Omega},
\end{align}
where
\begin{align} \label{zeewa}
Z_W = \tr e^{-\beta \Omega} = \left(e^{\beta\hbar w} - e^{-\beta\hbar w}\right)^{-1},
\end{align}
so that
\begin{align} \label{tolave}
\big\langle e^{-\beta \Omega} \big\rangle_{\tau_\beta} = Z_W \tr \big( \kappa_\beta \tau_\beta \big).
\end{align}

Now, since both $\Omega$ and $H$ are purely quadratic, we can write
\begin{align}
\bra{q} \kappa_\beta \ket{\bar{q}} &= e^{-A_W q^2 - A^*_W \bar{q}^2 - 2B_W q \bar{q} + C_W},
\\
\bra{q} \tau_\beta \ket{\bar{q}} &= e^{-A_H q^2 - A^*_H \bar{q}^2 - 2 B_H q \bar{q} + C_H},
\end{align}
and in Ref.~\cite{Scutaru_1998} it is shown that
\begin{align} \label{joint1}
\bra{q} \kappa_\beta \tau_\beta \ket{\bar{q}} = e^{-A q^2 - D \bar{q}^2 - 2B q \bar{q} + C},
\end{align}
with
\beaa \label{joint2}
A &= A_W - \frac{B_W^2}{A_W^* + A_H},
\\
D &= A_H^* - \frac{B_H^2}{A_W^* + A_H},
\\
B &= - \frac{B_W B_H}{A_W^* + A_H},
\\
C &= C_W + C_H + \frac{1}{2} \ln \frac{\pi}{A_W^* + A_H}. ~~
\eeaa

Along with Eq.~\eqref{joint2}, Eq.~\eqref{joint1} immediately yields
\begin{align} \nonumber
\tr (\kappa_\beta \tau_\beta) &= \int\limits_{-\infty}^\infty dq \bra{q} \kappa_\beta \tau_\beta \ket{q}
\\ \label{jarzyn}
&= \frac{\pi e^{C_W + C_H}}{\sqrt{\phantom{\big|}\! |A^*_W + A_H|^2 - (B_W + B_H)^2}}.
\end{align}

All that is left to do now is to determine how $A_W$, $B_W$, and $C_W$ are expressed in terms of $\Xi$. We again use formulas from Ref.~\cite{Scutaru_1998}: If
\begin{align} \nonumber
x^W = \left(\! \begin{array}{cc} x^W_{qq} & x^W_{qp} \\ \!\phantom{\Big|}\! x^W_{qp} & x^W_{pp} \end{array} \!\right),
\end{align}
with the elements defined as
\begin{align} \label{corr1}
x^W_{qq} = 2\av{q^2}_{\kappa_\beta}, \quad\; x^W_{pp} = 2\av{p^2}_{\kappa_\beta}, \quad\; x^W_{qp} = \av{\{ q, p \}}_{\kappa_\beta},
\end{align}
then
\beaa \label{shlor}
A_W &= \frac{1 + \hbar^{-2} \det x^W}{4 \hbar^{-1} x_{qq}^W} - i \frac{x_{qp}^W}{2 x^W_{qq}}, ~~~
\\
B_W &= \frac{1 - \hbar^{-2} \det x^W}{4 \hbar^{-1} x_{qq}^W},
\\
C_W &= - \ln\sqrt{\pi \hbar^{-1} x_{qq}^W}.
\eeaa
Obviously, identical formulas apply for $H$.

Let us now calculate the correlators in Eq.~\eqref{corr1}. In view of Eq.~\eqref{tran1},
\begin{align}
\mathbf{z} = \Sigma^{-1} \tilde{\mathbf{z}}.
\end{align}
So, keeping in mind Eq.~\eqref{Wdiag1}, by introducing creation-annihilation operators, and doing the simple algebra, we arrive at
\begin{align} 
x^W = \hbar \coth(\hbar \beta w) (\Sigma^{-1})^\rmT \Sigma^{-1},
\end{align}
When taking Eqs.~\eqref{magi} and~\eqref{agic} into account, this formula amounts to
\begin{align} \label{corr2}
x^W = \hbar \sqrt{\det \Xi} \coth\Big(\hbar \beta \sqrt{\det \Xi}\Big) \Xi^{-1}.
\end{align}
Analogously, reading the $x$ matrix for $H$ off Eq.~\eqref{corr2}, we get
\begin{align} \label{corr3}
x^H = \hbar \coth\Big(\frac{\beta \hbar}{2}\Big) \id_2.
\end{align}

It is easy to check that, in the $\beta \hbar \to 0$ limit,
\begin{align}
x^W = \frac{\beta^{-1}}{\det \Xi} \bigg(\! \begin{array}{cc} \Xi_{22} & -\Xi_{12} \\ -\Xi_{12} & \Xi_{11} \end{array} \!\bigg) + O(\beta \hbar),
\end{align}
where one should keep in mind that $\Xi$ is a symmetric matrix (because $\Lambda$ is). Similarly,
\begin{align}
x^H = 2 \beta^{-1} \id_2 + O(\beta \hbar).
\end{align}
Using these in Eq.~\eqref{shlor} and substituting the obtained quantities into Eq.~\eqref{jarzyn}, through a somewhat tedious algebra, we find that
\begin{align} \label{capote}
\tr(\kappa_\beta \tau_\beta) = \beta \hbar \sqrt{\frac{\det \Xi}{\det(\Xi + \id_2/2)}} + O[(\beta \hbar)^2].
\end{align}
In order to finally calculate $\big\langle e^{-\beta \Omega} \big\rangle_{\tau_\beta}$, let us first notice that, due to Eq.~\eqref{voskor}, $\det(\Xi + \id_2/2) = \det \Lambda$. Next, from Eq.~\eqref{zeewa}, and keeping in mind Eq.~\eqref{magi}, we see that
\begin{align}
Z_W = \frac{(\beta \hbar)^{-1}}{2 \sqrt{\det \Xi}} + O(1).
\end{align}
Substituting all these into Eq.~\eqref{tolave}, we finally obtain
\begin{align}
\big\langle e^{-\beta \Omega} \big\rangle_{\tau_\beta} = \frac{1}{2 \sqrt{\det \Lambda}} + O(\beta \hbar).
\end{align}

Lastly, calculating $Z_\beta[H']$ and $Z_\beta[H]$ in the same way as we calculated $Z_W$, we find that
\begin{align} \nonumber
e^{-\beta (F_\beta[H'] - F_\beta [H])} = \frac{Z_\beta[H']}{Z_\beta[H]} = \frac{1}{2 \sqrt{\det \Lambda}} + O(\beta \hbar),
\end{align}
which means that
\begin{align}
\big\langle e^{- \beta \Omega} \big\rangle_{\tau_\beta} = e^{-\beta (F_\beta[H'] - F_\beta [H])} + O(\beta \hbar).
\end{align}
This expression shows that the HOW scheme for continuous-variable systems is consistent in the classical limit, at least in the somewhat special case of $\Omega > \nul$. The case of $\Omega < \nul$ remains as an interesting open problem.

\end{document}